\documentclass[lettersize,journal]{IEEEtran}
\usepackage{amsmath,amsfonts}
\usepackage{array}
\usepackage{threeparttable}
\usepackage{tikz}
\usepackage{textcomp}
\usepackage{stfloats}
\usepackage{url}
\usepackage{verbatim}
\usepackage{graphicx}
\usepackage{cite}
\usepackage{listings}
\usepackage{lipsum}
\usepackage{caption}
\usepackage{graphicx}
\usepackage{subcaption}
\usepackage{amsthm}
\usepackage{algorithmicx}
\usepackage{algcompatible}
\usepackage{pgfplots}
\pgfplotsset{compat=1.17}
\usepackage[titlenumbered,ruled,linesnumbered]{algorithm2e}
\usepackage{algpseudocode}
\usepackage{mathtools,halloweenmath}
\usepackage{booktabs}
\newtheorem{theorem}{Theorem}[section]

\newtheorem{definition}{Definition}
\newtheorem{lemma}[theorem]{Lemma}

\newcounter{reviewer}
\setcounter{reviewer}{0}
\newcounter{point}[reviewer]
\newcounter{question}[reviewer]
\setcounter{point}{0}
\setcounter{question}{0}
\renewcommand{\thepoint}{\,\thereviewer.\arabic{point}} 
\renewcommand{\thequestion}{\,\thereviewer.\arabic{question}} 

\newcommand{\shortreply}[2][]{\medskip \noindent \textbf{Reply}:\  #2
	\ifthenelse{\equal{#1}{}}{}{ \hfill \footnotesize (#1)}%
	\medskip}

\newenvironment{proofidea}{%
  \proof}{\endproof}
\usepackage{enumitem}
\lstset{language=C}

\hyphenation{op-tical net-works semi-conduc-tor IEEE-Xplore}
\usepackage{orcidlink}

\begin{document}

\title{MaskedHLS: Domain-Specific High-Level Synthesis of Masked Cryptographic Designs}

\author{\IEEEauthorblockN{Nilotpola Sarma\dag, Anuj Singh Thakur\dag, and  Chandan Karfa\dag}
\IEEEauthorblockA{\\\dag Indian Institute of Technology Guwahati\\
 \dag \{s.nilotpola, anuj.thakur, ckarfa\}@iitg.ac.in}}

\maketitle
\begin{abstract}
The design and synthesis of masked cryptographic hardware implementations that are secure against power side-channel attacks (PSCAs) in the presence of glitches is a challenging task. High-Level Synthesis (HLS) is a promising technique for generating masked hardware directly from masked software, offering opportunities for design space exploration. However, conventional HLS tools make modifications that alter the guarantee against PSCA security via masking, resulting in an insecure RTL. 
Moreover, existing HLS tools can't place registers at designated places and balance parallel paths in a cryptographic design which is needed to stop glitch propagation.  
This paper introduces a domain-specific HLS approach tailored to obtain a PSCA secure masked hardware implementation directly from a masked software implementation. 
It places the registers at specific locations required by the glitch-robust masking gadgets, resulting in a secure RTL. Moreover, our tool automatically balances parallel paths and facilitates a reduction in latency while preserving the PSCA security guaranteed by masking. Experimental results with the PRESENT Cipher's S-box \textcolor{black}{and AES Canright's S-box }masked with four state-of-the-art gadgets, show that MaskedHLS produces RTLs with \textcolor{black}{73.9\% decrease in registers and 45.7\% decrease in latency on an average} compared to manual register insertions. The PSCA security of the MaskedHLS generated RTLs is also shown with TVLA test. 

\end{abstract}
\begin{IEEEkeywords}
 Embedded System Security, Power Side-Channel Security, High-Level Synthesis, Retiming
\end{IEEEkeywords}

\section{Introduction}
\label{section:introduction}
\IEEEPARstart{E}{mbedded} devices implementing a cryptographic algorithm are susceptible to Power Side-Channel Attacks (PSCAs) \cite{kocherDPA}, where an attacker uses the target device's power consumption information to extract the secret values processed by the cryptographic algorithm. This is possible due to the existence of a direct correlation between the power consumption, which is a result of the overall transistor activity, and the computations being performed by the device under attack. Masking \cite{dom2016} is a countermeasure against such attacks. Masking splits the secret information processed in the algorithm into random shares drawn from a uniformly sampled random distribution. Thereafter, execution proceeds by processing these shares independently, taking care to \textit{re-randomize} computations that cause their recombination. This randomizes the results of intermediate computations, and hence the power consumption becomes random as well. Masking can be applied at the hardware \cite{dom2016, HPC2020, comarknichel2023} and software levels \cite{ProvablySMOAES2004, compilerassistedmasking2012}. 

Hardware masking must ensure resilience against the asynchronous behavior of circuits such as those caused by \textit{glitches} that may cause the recombination of shares within the circuit, removing the masking security. Secondly, there are hardware masking verification tools \cite{HPC2020} to verify that a handwritten masked hardware design is secure. However, they are limited in applicability due to gaps in hardware masking verification theory \cite{moos2019glitch}, which prevents the scalability of verification to higher orders. Further, keeping in mind the development of new masking schemes/gadgets, there is an increased need for design-space exploration at the hardware level. Thus, developing secure masked hardware from scratch requires significant expertise in the design, verification, and design-space exploration of masked designs. 

In contrast, software masking is easier to obtain from the algorithmic specification and easily verified \cite{compilerassistedmasking2012}. Therefore,  ways to obtain masked hardware from the corresponding masked software implementations would be beneficial. This is indeed a possibility, as the glitch-resistant hardware masking properties are a superset of the software masking properties. Also, most glitch-resistant hardware-masked gadgets like DOM \cite{dom2016}, HPC1 and HPC2 \cite{pini2020}, and COMAR \cite{comarknichel2023} have the same structure as their software-masked counterparts with additional registers to stop the propagation of glitches. Thus, in order to generate PSCA-secure masked hardware from masked software, a simple translation of intermediate code to RTL is desired. That can be followed by inserting registers at well-defined locations for state-of-the-art masking gadgets like DOM, HPC and COMAR. 

In this regard, High-Level Synthesis (HLS), which automatically generates register transfer level (RTL) hardware from descriptions in C/C++, can be helpful. 
A few recent works \cite{shortestpath2021} aim to utilize HLS to convert masked software implementations to masked hardware automatically. {\it In this work, we have shown that all stages of the HLS can hamper the security of masked circuits.} They have been discussed in greater detail in Subsection \ref{subsection:impactofHLS}. This suggests the need for a domain-specific HLS tool for masked hardware design focussing on the primary objective of keeping the side-channel security of the circuit intact throughout the HLS process. 

\textcolor{black}{We propose a domain-specific HLS tool called MaskedHLS, which performs a security-preserving translation of software-level cryptographic implementations into masked hardware. Specifically, the contributions of this work are as follows:} 
\begin{itemize}
    \item \textcolor{black}{We have analyzed the impact of HLS optimizations and the need for domain-specific HLS for PSCA-secure hardware design from masked software (in Section \ref{section:analysis}).}
    \item \textcolor{black}{We have utilized the concept of retiming to insert registers in designated locations and balance parallel paths with optimal latency and registers for gadget-based masked design to protect against glitches (in Section \ref{section:proposedflow}).}
     \item\textcolor{black}{The correctness of MaskedHLS is shown (in Section \ref{section:proofs}).}
     \item\textcolor{black}{A thorough experiment with PRESENT Cipher's S-box and the Canright's AES-256 S-box masked with DOM, HPC1, HPC2 and COMAR gadgets shows the usefulness of our approach. (in Section \ref{section:experiments}).} 
\end{itemize}

 \textcolor{black}{The MaskedHLS tool is generic enough to work for any masking gadget}. To the best of our knowledge, this is the first work that provides a complete HLS flow for PSCA secure hardware design from masked cryptographic code written in C/C++. 

 \textcolor{black}{The rest of the paper is organised as follows. The related works are discussed in Section \ref{section:relatedworks}. Section \ref{section:background} discussed the background material requried to understand the working of MaskedHLS. Following this, Section \ref{section:analysis} illustrates the impact of HLS on the PSCA security of masked designs and the motivation of our work. Section \ref{section:proposedflow} discusses the flow of the MaskedHLS tool and its steps in great detail. Section 
 \ref{section:proofs} discusses the correctness of our tool. Section \ref{section:experiments} discusses the results of using our tool on the selected benchmarks. Finally, Section \ref{section:conclusion} concludes the paper and discusses possible future directions of work.}

\section{Related Works}
\label{section:relatedworks}
Several works focusing on HLS of cryptographic implementations have been published \cite{hlsofchaskey2022, hlsofnttesl2020, zhang2019memory}. Works like \cite{konigsmarkhls2017,zhang2019memory} looked at the effects of various HLS optimizations on the side-channel security of unmasked code. These works were enough to establish that HLS does not possess security evaluation as a part of its design flow and generating secure hardware by HLS requires a case-by-case examination of all the optimizations, which is a very difficult task. However, these works do not consider masked cryptographic implementations and the effect of HLS on masking.

In 2021, Sadhukhan et al. \cite{shortestpath2021} demonstrated how to generate side-channel secure masked hardware in quick time using HLS. For their work, they used a 3-bit DOM-masked AES S-box with bit sliced implementation and generated the Verilog (RTL) for it using the \textit{Bambu} HLS tool \cite{bambu}.  They observed that HLS does not lead to side-channel secure hardware always. Hence, they examined the pragmas in the HLS software and came up with certain scenarios where an unguided application of pragmas would lead to side-channel leakage. They then presented remedies for better application of such pragmas. 
Recently, a study by Pundir et al. \cite{pundir2022analyzing}, highlights the importance of considering security when using HLS for hardware design. They have rightly pointed out that no HLS tool has been found to consider side-channel leakage while performing their code transformation procedures. There does not exist any work that develops domain-specific HLS tool for PSCA secure RTL generation from masked cryptographic design. 

\section{Background} 
\label{section:background}

\subsection{Glitch-Resistant masking}\label{subsection:glitchresistantmasking}
Hardware masking of cryptographic algorithms against PSCAs proceeds by breaking the inputs into independent random shares. For an affine component of the algorithm, these shares can be computed independently of each other to obtain the output shares. For an affine operation, $\oplus$ (XOR), such as in $c = a \oplus b$, can be broken into computations $c0 = a0 \oplus b0$ and $c1 = a1 \oplus b1$. Here $a$ and $b$ are broken into two shares initially as $(a0, ~a1): (a \oplus r1, ~r1)$ and $(b0,~ b1): (b \oplus r2, ~r2)$, where $r1$ and $r2$ are drawn independently from a uniform random distribution. Here $a0,~ b0,~ c0$ are $0$-shares and $a1,~ b1,~ c1$ are $1$-shares. Thereafter, $c0 \oplus c1$ gives the correct value of $c$. 
 
Similarly, an operation like $\otimes$ (bit-wise multiplication \footnote{\textcolor{black}{In this paper, $\otimes$ and $\&$ has been used interchangeably to mean bit-wise multiplication. $\oplus$ and $\hat{}$ has been used interchangeably to mean bit-wise XOR.}}), such as in $c = a \otimes b$, can be performed using shares $a0$, $a1$ and $b0$, $b1$. But to perform the multiplication operation, the four terms $a0 \otimes b0$, $a0 \otimes b1$, $a1 \otimes b0$, and $a1 \otimes b1$ must be calculated. Two of these computations, $a0 \otimes b1$, $a1 \otimes b0$, can not be performed without mixing the $0$-shares with the $1$-shares, which violates the independence of shares as assumed by masking theory. Hence, these operations need to be carefully remasked to ensure PSCA-security. 

Some algorithmic tricks can be used to mask these non-linear computations to optimize the amount of remasking. For example, the $SecMult$ algorithm by Rivain and Prouff \cite{rivain2010provably} proceeds by calculating the term $a0 \otimes b0$ separately and then performing masking with a random variable $r$  as  $(a0 \otimes b0) \oplus r$. The other terms are computed as $((((a1 \otimes b1) \oplus (a1 \otimes b0) \oplus (a0 \otimes b1)) \oplus r)$ following the parenthesized order. This requires two remasking operations, resulting in a masked hardware implementation for the multiplication operation. 

However, this multiplication algorithm does not remain secure in a glitchy circuit. Glitches are the phenomenon of different transition times in the signals of a circuit caused by variations in wire lengths and transistor speeds in circuits. As demonstrated in \cite{bilginthesis}, assuming that only one share $a1$ arrives later than the others, the number of times all the gates in the $SecMult$ circuit change values on different values of $b$ reveals a correlation between the power consumption and the value of $b$. Thus, masking in a glitchy circuit should be carefully handled. Several masking schemes were designed to be resistant to glitches in hardware \cite{ThresholdImplementations2006, prouff2011higherorderglitch, higherorderti2014, dom2016}.

One of the approaches towards glitch-resistant masking of cryptographic hardware is replacing all non-linear operations with \textit{gadgets} that are provably secure independently as well as in composition. A gadget is a probabilistic algorithm that takes n m-shares as inputs (where n is the number of inputs of the gadget) and returns a single m-shared output. {\it A gadget-based construction of masked circuits replaces one or more non-linear operations with secure gadgets.} Depending on the security guarantees provided by the gadgets, the glitch-robust security of the gadgets in composition can be guaranteed. In the following subsection, we briefly introduce those gadgets. 

\subsection{Multiplication Gadgets}
\label{subsection:gadgets}
Gross et al. \cite{dom2016} presented Domain-Oriented Masking (DOM) of hardware implementations of cryptographic algorithms against PSCAs 

where each input share corresponds to a domain. DOM  ensures that the computations corresponding to each share are carried out in their corresponding domain, and domains carry out computations independently of each other.
In this context, \textit{non-affine} operations require computations across domains, and these cross-domain computations require \textit{remasking} using new random values.
It was observed that glitches affected the combination of cross-domain shares, and hence, registers are used at those locations.
An example of a DOMAND gate (a multiplication gadget for one bit) is shown in Fig. \ref{fig:domandgadget}. 
Here, the products containing cross-domain terms,  $a0 \otimes b1$, $a1 \otimes b0$ are remasked using the same random value $r$ sampled from a uniform random distribution after which the outputs of the masking gates (XOR) are registered. 

\begin{figure*}[t!]
\begin{subfigure}[t]{0.20\textwidth}
    \centering
    \includegraphics[scale=0.26]{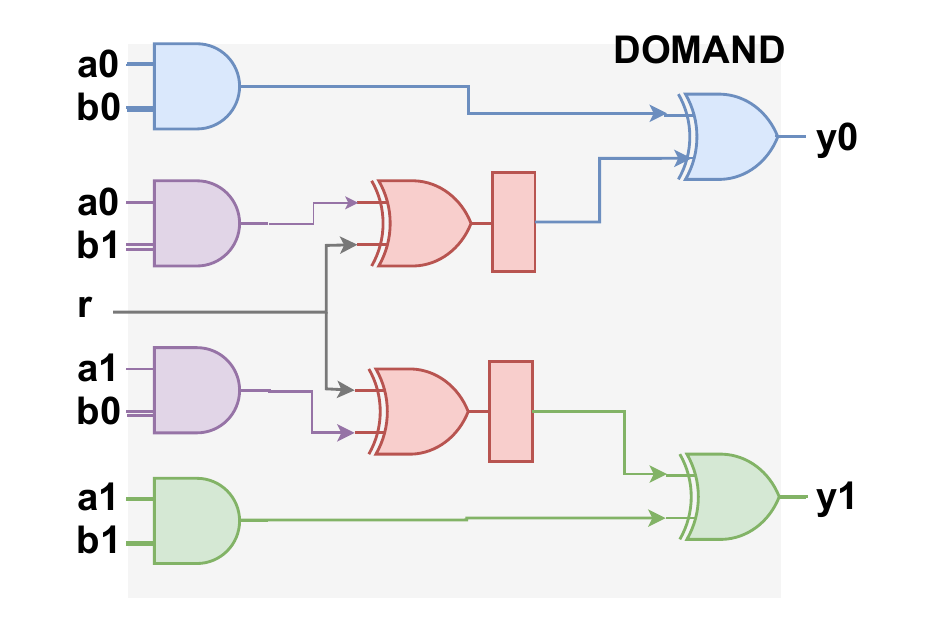}
    \caption{}
    \label{fig:domandgadget}
\end{subfigure}
   ~
\begin{subfigure}[t]{0.25\textwidth}
    \centering
    \includegraphics[scale=0.26]{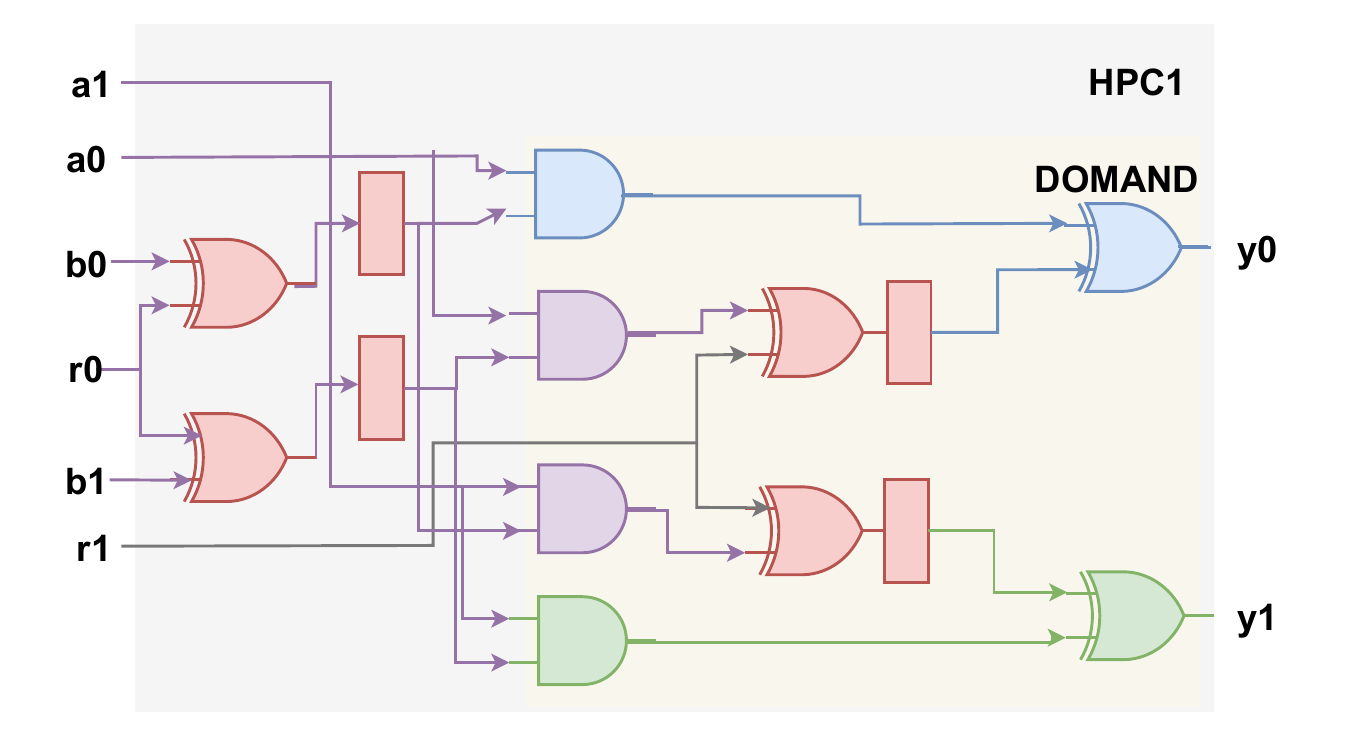}
    \caption{}
    \label{fig:hpc1gadget}
\end{subfigure}
\hfill
\begin{subfigure}[t]{0.22\textwidth}
    \centering
    \includegraphics[scale=0.13]{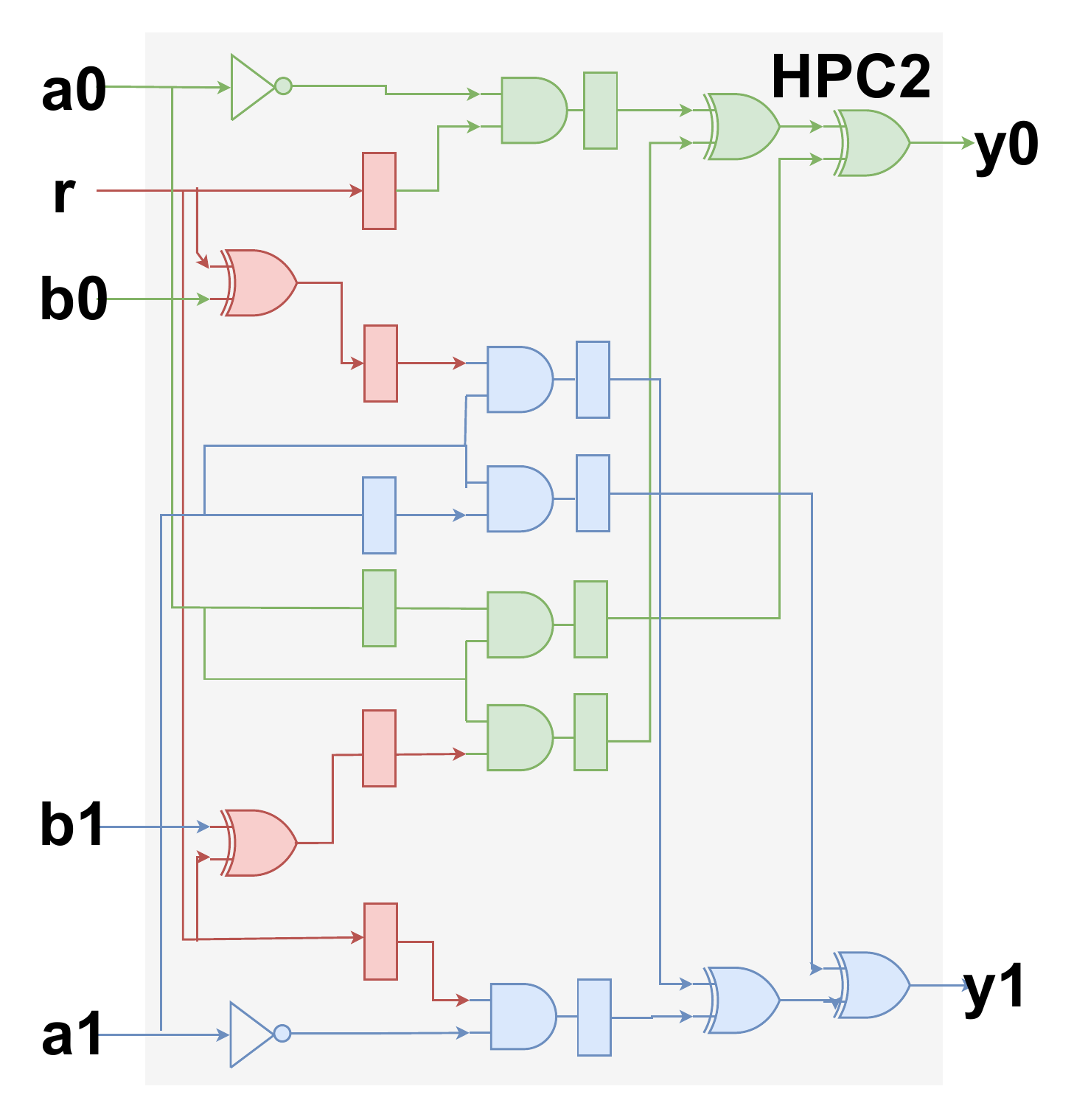}
    \caption{}
    \label{fig:hpc2gadget}
\end{subfigure}
\begin{subfigure}[t]{0.25\textwidth}
    \centering
    \includegraphics[scale=0.20]{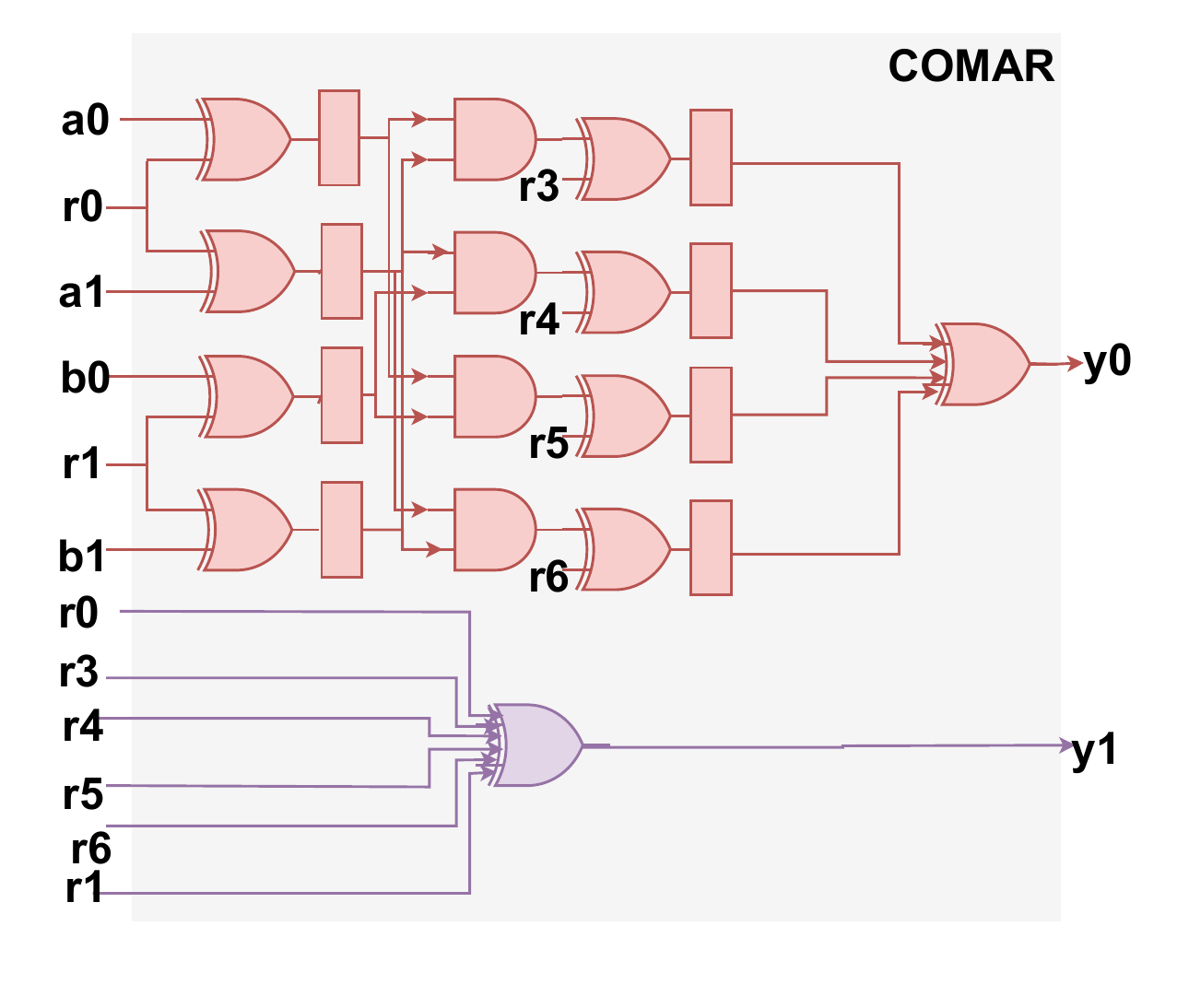}
    \caption{}
    \label{fig:comargadget}
\end{subfigure}
\caption{\textcolor{black}{Masked Multiplication Gadgets (a) DOMAND (b) HPC1 (c) HPC2 (d) COMAR.}}
\label{fig:gadgets}
\end{figure*}


A similar class of \textit{non-affine} gadgets were introduced in \cite{cassiers2020hardware}. The strategy was to refresh the masks of the gadgets before multiplying. The HPC1 gadget, proceeds by refreshing one of the inputs of the DOM gadgets (with two operands) using a refresh (remasking) gadget. For the computation $c = a \otimes b$, a HPC1 gadget masks both the shares of the input $b$ as : $(b0 \oplus r0)$ and $(b1 \oplus r0)$ and puts a register after these masked inputs before being input to the DOMAND circuit. The other inputs $a0$ and $a1$ are put into the DOMAND circuit. The HPC1 multiplication gadget is shown in Fig. \ref{fig:hpc1gadget}.

In HPC2 \cite{cassiers2020hardware}, all the inputs that have been split into shares of two are registered. Thus, one register each is placed after $a0$, $a1$, $b0$ and $b1$ for the computation $a \otimes  b$ in two shares. After that the computation is performed as follows: $c0 = ((a0 \otimes r) \oplus (b1 \otimes r)) \oplus (a0 \otimes b0)$ and $c1 = ((a1 \otimes r) \oplus (b0 \otimes r)) \oplus (a1 \otimes b1)$ with registers being placed at all the input shares and four intermediate locations. The HPC2 multiplication gadget using two shares is shown in Fig. \ref{fig:hpc2gadget}.   

Figure \ref{fig:comargadget} represents the COMAR gadget for $c = a \otimes b$. All the input signals are masked with the same mask bit $r$ for the $0$-shares and $r\prime$ for the $1$-shares. 
Four fresh mask bits r2 to r5 are used to mask the non-linear terms. As shown, the shared output is formed as $c0 = (((a0 \oplus r) \otimes (b0 \oplus r^{\prime})) \oplus r2) \oplus  (((a0 \oplus r) \otimes (b1 \oplus r^{\prime})) \oplus r3) \oplus  (((a1 \oplus r) \otimes (b0 \oplus r^{\prime})) \oplus r4) \oplus  (((a1 \oplus r) \otimes (b1 \oplus r^{\prime})) \oplus r5) $ and $c1 = r2 \oplus r3 \oplus r4 \oplus r5$.  
This gadget uses six masked bits which is larger than the HPC2 2-input AND gadget. However, all instantiated 2-input COMAR-AND gadgets in a circuit can use the same six random masks.

\subsection{\textcolor{black}{Retiming Basics}}
\label{subsection:retimingbasics}
Retiming \cite{parhi2007vlsi} is a widely used technique to change the locations of the registers in a design without affecting the input/output functionality of the design. In the following, we formalize the retiming process.

 A sequential circuit is represented by a directed graph $G(V, E)$ where each $\nu \in V$ is a design unit and each $e_{u,\nu} \in E$ is the edge corresponding to the flow of signal from the output of design unit $u$ to the input of design unit $\nu$ for any $u, \nu \in V$.
 Each edge $e_{u,\nu} \in E$ has an edge weight $w(e_{u,\nu})$ equal to the number of registers in that edge such that $w(e_{u,\nu}) \ge 0.$
 Each vertex $\nu \in V$ has a constant computational delay $d(\nu)$ such that $d(\nu)\ge0.$


\noindent
Given a  \textit{circuit} represented by a directed graph $G(V, E)$, a path $p$ is a sequence of alternating vertices and edges such that each edge is a fan-out of the previous vertex in the sequence such that: Computational Delay of the Path ($d(p)$) is the summation of the computational delays of all nodes in the path. Weight of the  Path $p$ ($w(p)$) is the summation of the weights of all edges $e \in E$ in this path.  A purely combinational path in a circuit will therefore have $w(p) = 0$. The clock period ($c$) of a circuit can thus be written as :

\begin{equation}
c = \mathop{max}_{\textbf{$p|w(p)=0$}}{d(p) }
\end{equation}

A  retiming label, $r(\nu)$, associated with each vertex $\nu \in V$ indicates the number of registers moved from the outputs to the input of the vertex $\nu$ associated with the retiming label. Retiming is defined as assigning retiming labels $r(u)$ to all the design units $u \in V$ of the circuit. If the edge weights for $e_{u,\nu} \in E$ in the original circuit, $G$, changes to an edge weight $w_r(e_{u,\nu})$ after retiming, then: 
        \begin{equation} 
            \textbf{$w_r(e_{u,\nu})$ = $r(\nu) + w_(e_{u,\nu}) - r(u)$ }
        \end{equation}

    Given a target clock period $c$, the minimum period global retiming of a circuit produces a retimed circuit subject to the following constraints on retiming labels:
 \begin{itemize}
     \item \textit{Feasibility Constraint (FC)}: For each edge $e_{u, \nu} \in E$, the edge weight $w_r(e_{u, \nu})$ in the retimed circuit must be non-negative, i.e., $w_r(e_{u, \nu}) \ge 0,  \forall   e_{u, \nu} \in E$. Using (2),
         \begin{equation}
             \textbf{$r(\nu) - r(u) \le  w_(e_{u,\nu})  , \forall   e_{u, \nu} \in E$}
         \end{equation}
     \item \textit{Critical Path Constraint(CPC)}: The delay $d(p)$ of all paths $p$ with $w(p)=0$ should be less or equal to the clock period after retiming. 
 \end{itemize}

 \noindent
 Consider any two nodes $u$ and $\nu$ in $G$. 
 There can be multiple paths from $u$ to $\nu$. 
 The minimum number of registers on any path from $u$ to $\nu$ is $W(u, \nu)$. 
Let the computational delays of all $n$ paths from $u$ to $\nu$ having $W(u,\nu)$ registers be $d(p_1), d(p_2), ..., d(p_n)$. Then  $D(u, \nu)$ is:
        \begin{equation}
            D(u, \nu) = \mathop{max}_{\textbf{$i=1$}}^{n}{d(p_i)}
        \end{equation}

\textit{With $D(u, \nu) > c$ for all paths from $u$ to $\nu$, $r(\nu) - r(u) + w(u, \nu) \ge 1$ must hold to make the critical path's computational delay $\le c$. }
Formally, the \textit{CPC} can be re-stated as: For all paths from $u$ to $\nu$ with $D(u, \nu) > c$, 
     \begin{equation}
         r(u) - r(\nu) \le w(u, \nu) - 1
     \end{equation}

Thus the objective of retiming is to identify the \textit{retiming labels} $r$ for all vertices that satisfy the constraints in equations (3) and (5). These can be solved using all pairs' shortest path as described in Section \ref{section:proposedflow}.

\section{\textcolor{black}{Analysis of the impact of HLS on PSCA security}}
\label{section:analysis}

In this Section, we first explore the impact of HLS optimizations on the PSCA security of masked hardware implementations. Following that, we discuss the need for automated optimal register insertion during the translation from masked software to masked hardware. 

\subsection{Impact of HLS on Power Side-Channel Security}\label{subsection:impactofHLS}

HLS comes with various optimizations that help in obtaining an area/latency-optimized RTL from C/C++ code. 
Given a gadget-based masked software code (without register annotations) of a cryptographic algorithm, HLS can convert it to an RTL design. We observe that the optimizations performed by HLS impact the PSCA security of the masked designs under consideration. Using case studies of VivadoHLS \cite{vivado} and Bambu\cite{bambu}, we present a few instances where HLS hampers the PSCA security of the masked hardware. 

\subsubsection{HLS front-end} The HLS front-end consists of the C compilation stage which translates the C/C++ code into an intermediate representation (IR) using a compiler like GCC  or LLVM. This phase applies optimizations like expression simplification, code motion, reassociation, etc. that may hamper the security guarantees made at C-level masked implementation. Below we present a few instances of such optimizations and illustrate how they hamper the side-channel security of the IR. 
 
{\it Reassociation:} LLVM compiler reassociates some of the intermediate computations causing incorrect recombination of shares within the algorithm. The fact that Bambu HLS does this was identified in \cite{shortestpath2021}.  For the input code in Listing. \ref{lst:domand1} and its interpretation in Figure \ref{fig:domandwithoutregs}, the XOR gates $i1$ and $i2$ are required after the cross-domain products $p2$ and $p3$ as specified in the C code. However, LLVM shifts the XOR gates to mask the products $p1$ and $p4$ instead. The absence of these XOR gates masking the outputs of $p2$ and $p3$ results in an unmasked circuit. Specifically, the computation of y0 in Listing. \ref{lst:domand1} is carried out as 
    $y0 = ((a0  \otimes  b1) \oplus  z ) \oplus (a0  \otimes  b0)$, ensuring that cross-domain computations are masked before recombination. The reassociation causes the computation to be carried out as  $y0 = ((a0  \otimes  b0) \oplus z ) \oplus (a0  \otimes  b1)$ instead. We also could not stop this optimization by LLVM through the Bambu tool version 0.9.6 with $\#pragma~HLS~\_interface ~\langle variable \rangle~ none\_registered$ as done in \cite{shortestpath2021}.
    
{\it Expression Balancing in GCC:}  Many times C/C++ code is written with a sequence of operations, resulting in a long chain of operations in RTL after HLS. This can increase the delay in the design. By default, VivadoHLS rearranges the operations using associative and commutative properties. This rearranges operators to construct a balanced tree and reduces delay. This optimization might hamper the security of the masked circuit. In Listing \ref{lst:expressionbalancing} for example, we have the DOMAND software masked code which gets reassociated into the expression: $y0 = ((a0  \otimes  b0) \oplus  z ) \oplus (a0  \otimes  b1)$  as a result of these optimizations by GCC.

For integer operations expression balancing is enabled by default but can be disabled using the $\#pragma~HLS~EXPRESSION\_BALANCE~off$ directive as shown in Listing. \ref{lst:resourceallocation}. For floating-point operations, expression balancing is disabled by default but may be enabled using the $\#pragma~HLS~EXPRESSION\_BALANCE$.

Thus it is clear that the designer needs precise knowledge of all the optimizations to avoid such consequences.

\begin{lstlisting}[ 
    basicstyle=\footnotesize, caption={DOMAND ecpression},captionpos=b, label={lst:expressionbalancing}] 
 1. #include "ap_int.h"
 2. ap_int<9> domand (ap_int<9> a0, ap_int<9> a1, 
 3. ap_int<9> b0, ap_int<9> b1, ap_int<9> z, 
 4. ap_int<9> *y0, ap_int<9> *y1) {
 5.    *y0 = ((a0 & b1) ^ z ) ^ (a0 & b0);
 6.    *y1 = ((a1 & b0) ^ z ) ^ (a1 & b1);
 7. return 0;}
\end{lstlisting}
\begin{lstlisting}[
    basicstyle=\footnotesize, caption={Resource-shared DOMAND},captionpos=b,  label={lst:resourceallocation}] 
 1. #include "ap_int.h"
 2. ap_int<9> multiply (ap_int<9>a0, ap_int<9>a1) {
 3. #pragma HLS INLINE off
 4. 	return a0 & a1;
 5. }
 6. ap_int<9>domand (ap_int<9>a0, ap_int<9>a1, 
 7. ap_int<9>b0, ap_int<9>b1, ap_int<9>z,
 8. ap_int<9>*y0, ap_int<9>*y1) {
 9. #pragma HLS EXPRESSION_BALANCE off
10. #pragma HLS allocation instances=multiply limit=2 
11. function //above pragma enables resource sharing
12. *y0 = (multiply(a0, b1) ^ z ) ^ multiply(a0, b0);
13. *y1 = (multiply(a1, b0) ^ z ) ^ multiply(a1, b1);
14. return 0;}
 \end{lstlisting}

\subsubsection{HLS Backend - Scheduling and Resource Allocation}
\begin{figure*}
    \centering
 \includegraphics[scale=0.30]{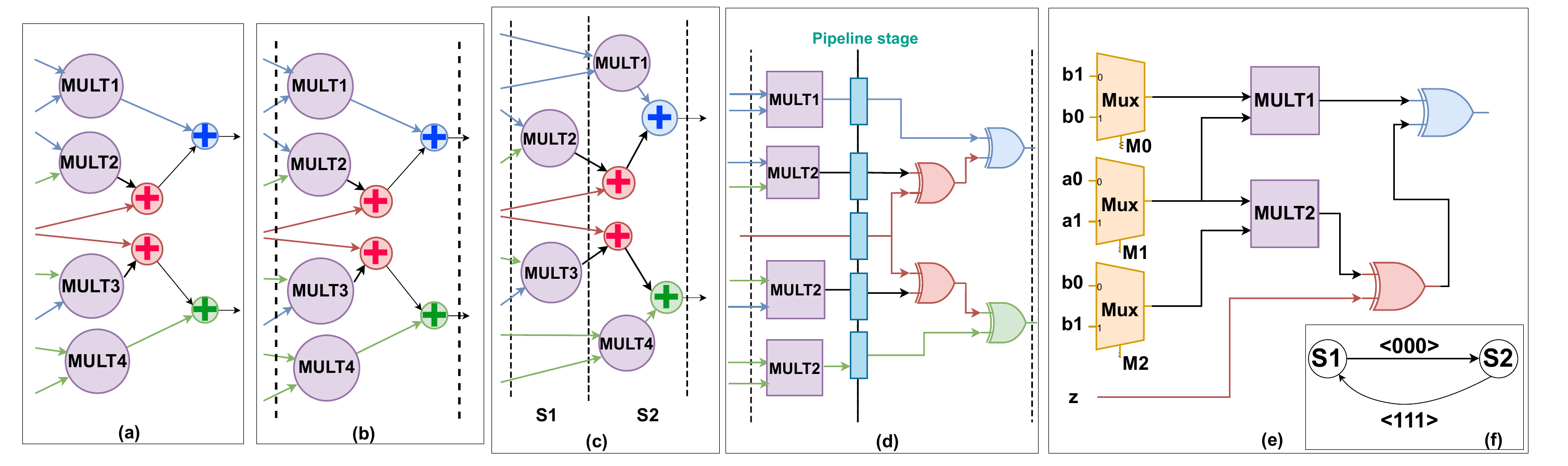}
    \caption{Example: (a) CDFG of the behavior in  Listing. \ref{lst:domand1}, (b) Schedule for 10ns, (c)  Schedule for 1ns , (d) Pipelined Design, (e) Resource-Shared Design (f) Controller for Resource-Shared Design}
    \label{fig:impactofhls}
\end{figure*}
After the preprocessing stage, based on the target clock period, the scheduler decides the number of time steps and the scheduled time of each operation. For example, if the target clock is 10ns for the example in Listing. \ref{lst:resourceallocation}, all the operations are scheduled in one clock by VivadoHLS as shown in Figure \ref{fig:impactofhls}b. A single-clock operation-chained datapath will be generated for the single-cycle schedule of Figure \ref{fig:impactofhls}b. It is pointed out in \cite{arxhls2023}, that such an operation chaining in the datapath also introduces side-channel vulnerabilities in the RTL. However, when the target clock is set to 1ns, the design is scheduled in 2 time steps as shown in Figure \ref{fig:impactofhls}c. The actual datapath depends on the resource optimization of the HLS tool. By default, most of the HLS tools generate a pipelined design as the one in Figure \ref{fig:impactofhls}d generated from the schedule in Figure \ref{fig:impactofhls}c. On the other hand, a user can specify resource constraints to restrict the area of the generated hardware. VivadoHLS allows the specification of resource bounds using pragmas like $\#pragma ~HLS~ resource\_allocation$ for a function or operation to restrict its number of instances. Consider the Listing. \ref{lst:resourceallocation} \footnote{Listing. \ref{lst:expressionbalancing}, \ref{lst:resourceallocation} and \ref{lst:domand1} are different representations of the same DOMAND behaviour. We took three variations to illustrate the various security implications of HLS.}. The number of multiplier instances has been restricted to 2 (line number 10). This results in a circuit with a datapath as shown in Figure \ref{fig:impactofhls}e where the resources are shared in a time-division multiplexed manner. 

To control the execution of the datapath, a controller FSM as shown in Figure \ref{fig:impactofhls}f will also be generated by the HLS tool. Here, in state S1, the controller will assign $\langle M0M1M2 \rangle = 000$ to execute the operations scheduled in state S1. Similarly, it will assign $\langle M0M1M2 \rangle = 111$ in S2 to execute the operations scheduled in S2. Such controller FSM may further introduce glitches in the datapath as shown in \cite{raghunathan1999register}. The PSCA security of the generated RTL may be compromised due to these glitches. Thus, additional analysis is needed for such a controller. 

\subsubsection{Discussion}

Thus, it is evident that masked designs are restrictive in terms of allowing for design-space exploration via rearrangement and resource-sharing. Additionally, the yet unexplored security vulnerabilities of the HLS optimizations on various other cryptographic implementations present a vast range of possible security vulnerabilities. However, the steps in converting masked software to masked hardware for state-of-the-art masking schemes like DOM \cite{dom2016}, HPC \cite{pini2020}, COMAR \cite{comarknichel2023} primarily require an operation by operation conversion into RTL from the IR. This should be followed by the insertion of registers at proper locations in the design to stop leakage due to glitches. For the DOMAND circuit in Figure \ref{fig:domandwithregs}, the registers $r01$ and $r10$ are required as shown in Section \ref{section:background}. Thus, it may not be advisable to use a generic HLS tool to directly generate PSCA secure masked hardware. Instead, the HLS process seeking to leverage the software-level masking security must focus on the optimal insertion of registers. In the next Subsection, we discuss how this can be optimally done.

\subsection{Motivation of our Work}\label{subsection:needforRegBal}

Given a software-level masked implementation of a cryptographic algorithm, we need to add registers in specific places in order to stop the propagation of glitches.

HLS tools have \textit{pragma} directives to allow such annotation.
\begin{figure}[t!]
\begin{subfigure}[t]{0.2\textwidth}
  \centering
  \includegraphics[width=1\linewidth]{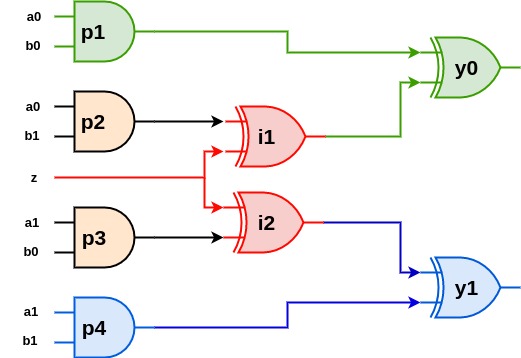}
  \caption{}
  \label{fig:domandwithoutregs}
\end{subfigure}%
\hfill
\centering

\begin{subfigure}[t]{0.2\textwidth}
  \centering
  \includegraphics[width=1\linewidth]{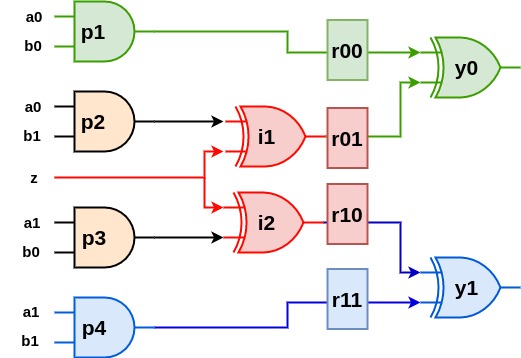}
  \caption{}
  \label{fig:domandwithregs}
\end{subfigure}
\label{fig:example1}
\caption{(a) Software-masked DOMAND hardware realization. (b) Hardware-masked DOMAND circuit with masking and balancing registers. }

\end{figure}
\begin{figure}
    \centering
    \includegraphics[scale=0.3]{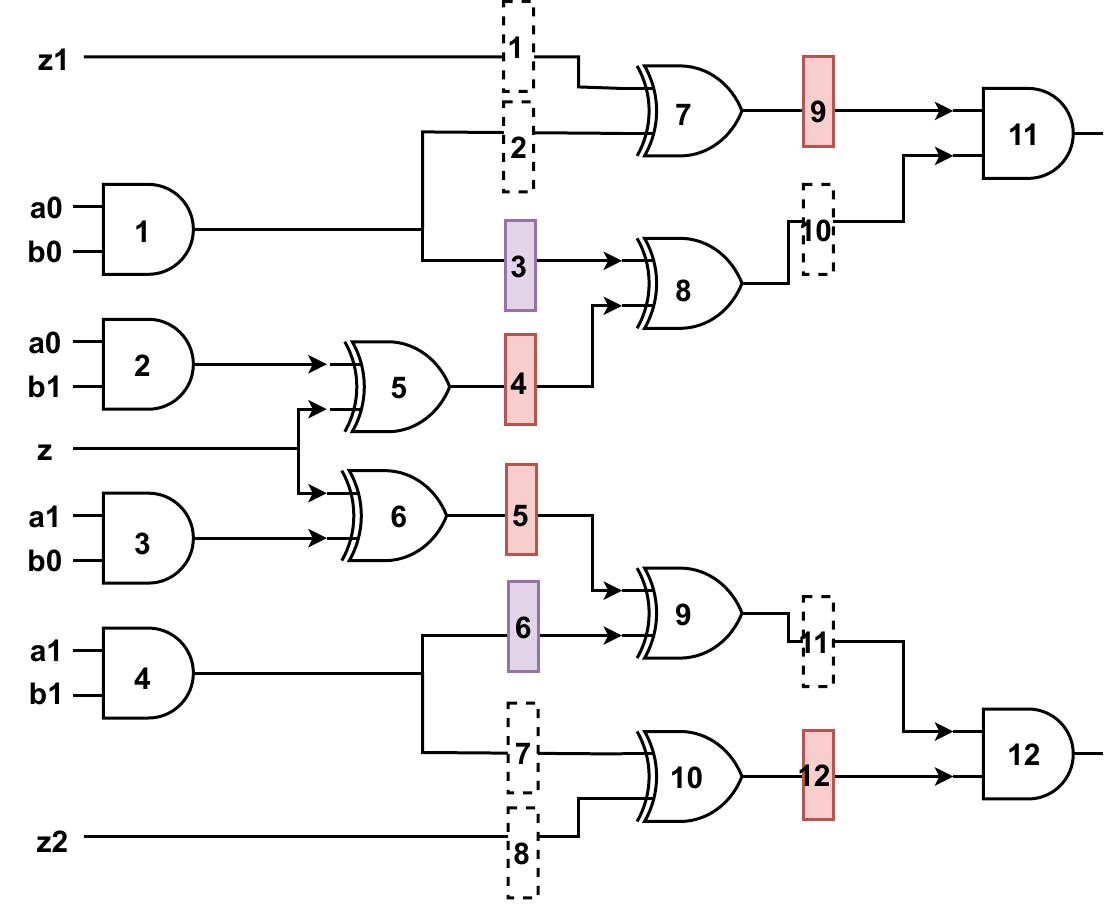}
    \caption{An example to illustrate the need for optimal register balancing in masked circuits. }
 
    \label{fig:motivation}
\end{figure}
However, there is no guarantee that HLS tool will enforce these pragmas due to the other constraints. For example, the Bambu HLS Version 0.9.6 ignored the $\#pragma~HLS~ none\_registered$ when applied on our example in Listing \ref{lst:domand1}. 
Further, a design may have many parallel paths. To preserve the latency of the circuit after the insertion of registers in specific paths, the parallel paths would also require register insertion. For example, consider the circuit given in Figure \ref{fig:domandwithoutregs}. This circuit corresponds to the DOMAND software specification in Listing \ref{lst:domand1} DOM-masked hardware requires the insertion of registers $r01$ and $r10$, as shown in Figure \ref{fig:domandwithregs}. With only these two registers, the inputs to the gates $y0$ / $y1$ have different latencies. This will results in incorrect circuit behavior. Thus, the \textit{balancing registers} $r00$ and $r11$ must be inserted as well. Figure \ref{fig:domandwithregs} is the circuit corresponding to an HLS-C input annotated as in Listing. \ref{lst:domand1}. For bigger circuits, there may be many parallel paths. Therefore register annotations that facilitate register insertion in parallel paths need automation.

For register insertion in multiple locations,  the number of \textit{balancing registers} and the design latency must be minimized as well. Consider the circuit in Figure \ref{fig:motivation}. Let us assume that the registers numbered 4, 5, 9 and 12 are required by a masking scheme for security. To insert registers 4 and 5, registers 1, 2, 3, 6, 7 and 8 need to be inserted to balance the parallel paths. Now, inserting registers 9 and 12 will require the insertion of registers 10 and 11 to balance the paths at gates $11$ and $12$. Thus a total of 12 registers need to be inserted. The overall latency of the circuit is now 2. However, careful examination of the circuit reveals that registers 1, 2, 7, 8, 10, and 11 can be optimized out. With the other 6 registers (3, 4, 5, 6, 9, and 12) the circuit has an overall latency of 1 and all the parallel paths in the circuit are balanced. This illustrates the need for optimal register balancing in masked circuits. 

\begin{figure}[htb]
\begin{minipage}[t]{0.5\linewidth}
\begin{lstlisting}[
    basicstyle=\footnotesize, caption={DOMAND C 
  code },captionpos=b, label={lst:domand1}
]
 1. int domand (bool a0, 
 2. bool a1, bool b0, 
 3. bool b1, bool r01, 
 4. bool *i1, bool *i2,
 5. bool z,
 6. bool *y0, bool *y1)     
 7. {p2 = a0 * b1;
 8. i1 = p2 ^ z;
 9. p3 = a1 * b0;
10. i2 = p3 ^ z;
11. p1 = a0 * b0;
12. p4 = a1 * b1;
13. *y0 = *i1 ^ p1;
14. *y1 = *i2 ^ p4;
15. return 0;}
     
\end{lstlisting}

\end{minipage}%
\begin{minipage}[t]{0.45\linewidth}
~~
\begin{lstlisting}[
    basicstyle=\footnotesize, caption={DOMAND with register annotations } ,captionpos=b, label={lst:domand2}
]
 1. int domand (bool a0, 
 2. bool a1, bool b0, 
 3. bool b1, bool z, 
 4. bool *i1, bool *i2, 
 5. bool *y0, bool *y1)    
 6. {p2 = a0 * b1;
 7.  i1 = reg(p2 ^ z);
 8.  p3 = a1 * b0;
 9.  i2 = reg(p3 ^ z);
10.  p1 = reg(a0 * b0);
11.  p4 = reg(a1 * b1);
12.  *y0 = *i1 ^ p1;
13.  *y1 = *i2 ^ p4;
14. } return 0;}
 \end{lstlisting}
\end{minipage}

\label{listing}
\end{figure}
 In our opinion, modern HLS tools perform too many optimizations which are counter-productive for PSCA secure hardware generation through HLS. There should be one-to-one translation from the C code to RTL. Moreover, register insertion and balancing are the most important measures to stop the propagation of glitches while maintaining minimum register usage and latency. None of the existing HLS tools can do these tasks in an automated way.  \textit{This calls for a domain-specific HLS tool for masked designs as proposed in this work}. Such a tool would not create vulnerabilities due to HLS and retain the security properties required while performing register balancing in an automated way. \textit{In this work, we have developed an automated register balancing at behavioural level using the concept of retiming \cite{parhi2007vlsi}.} \textcolor{black}{The basic concept of retiming is presented in Section \ref{subsection:retimingbasics}}.

\begin{figure*}[t!]
     \centering
     \begin{subfigure}[t]{0.22\textwidth}
         \centering
         \includegraphics[width=1\textwidth]{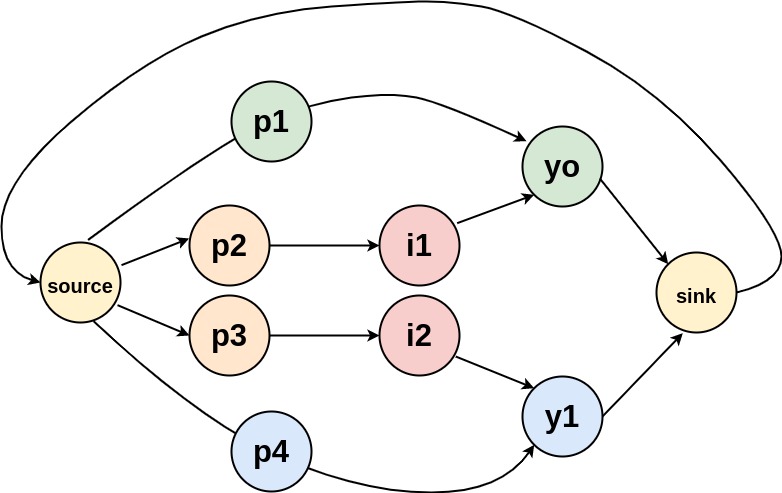}
         \caption{}
         \label{fig:ast_retiming}
     \end{subfigure}
     ~
     \begin{subfigure}[t]{0.25\textwidth}
         \centering
         \includegraphics[width=1\textwidth]{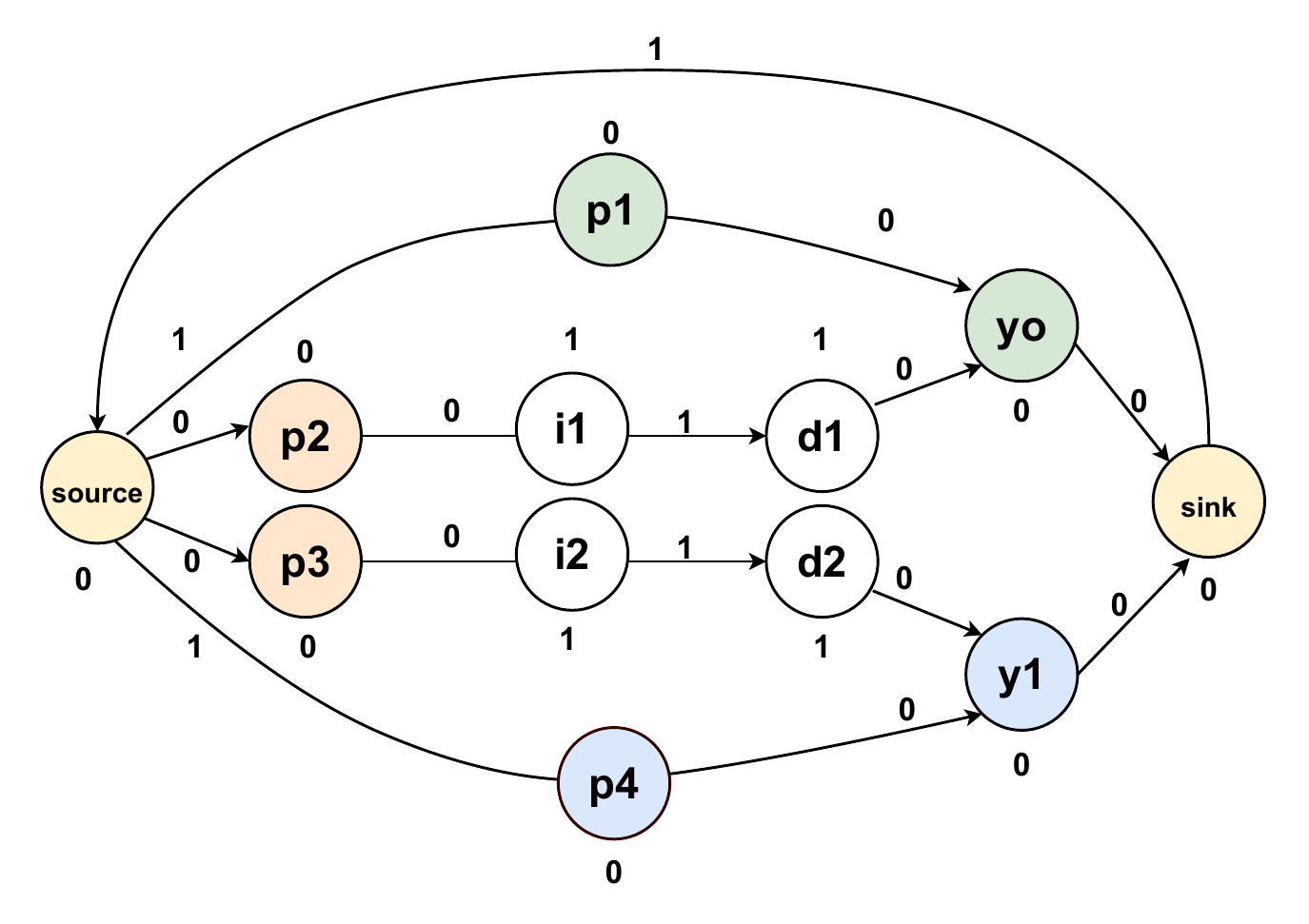}
         \caption{}
         \label{fig:HLS_model_back}
     \end{subfigure}
     \hfill
     \begin{subfigure}[t]{0.25\textwidth}
         \centering
         \includegraphics[width=1\textwidth]{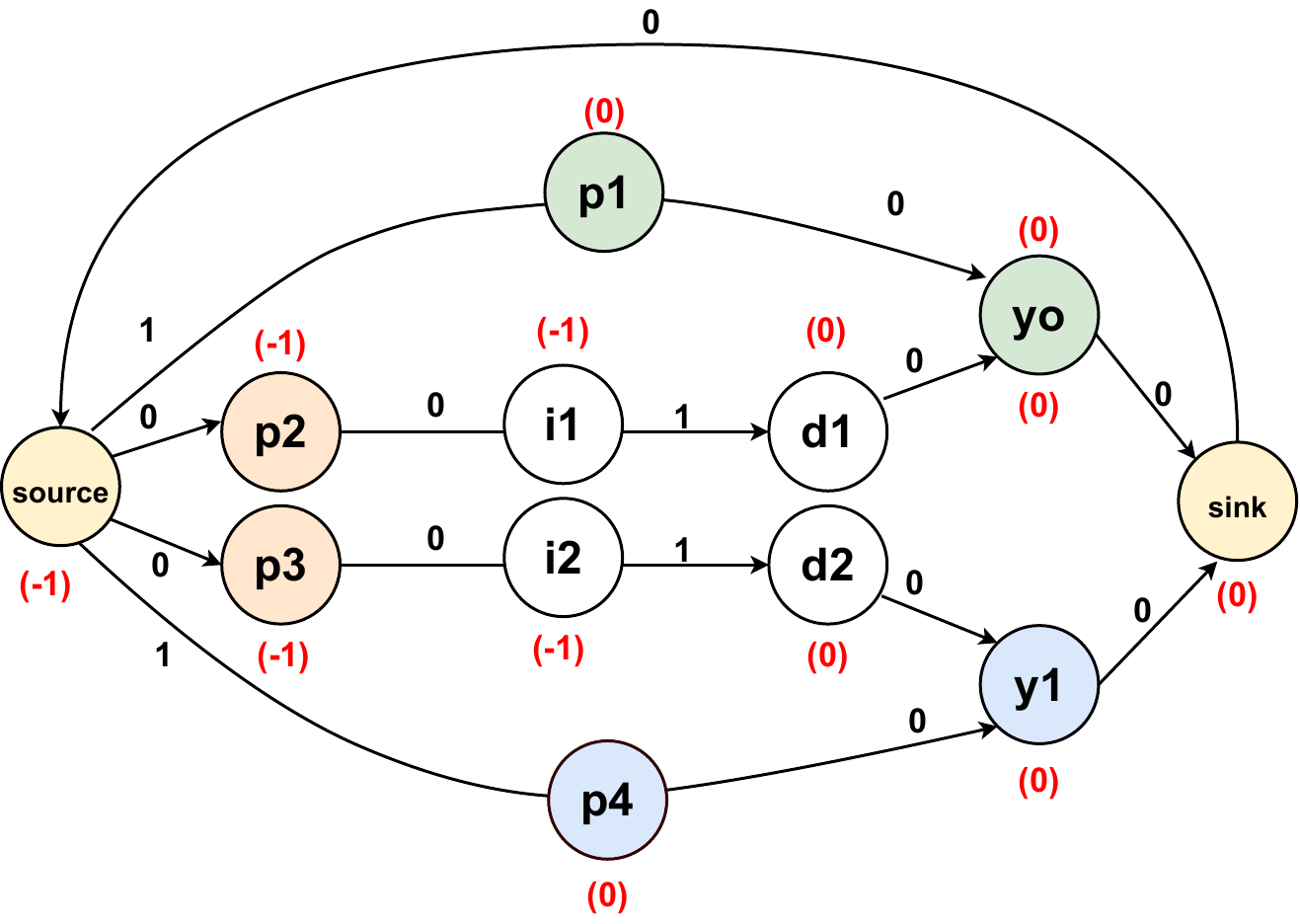}
         \caption{}
         \label{fig:HLS_model_final}
     \end{subfigure}
     ~ ~
     \begin{subfigure}[t]{0.22\textwidth}
         \centering
         \includegraphics[width=1\textwidth]{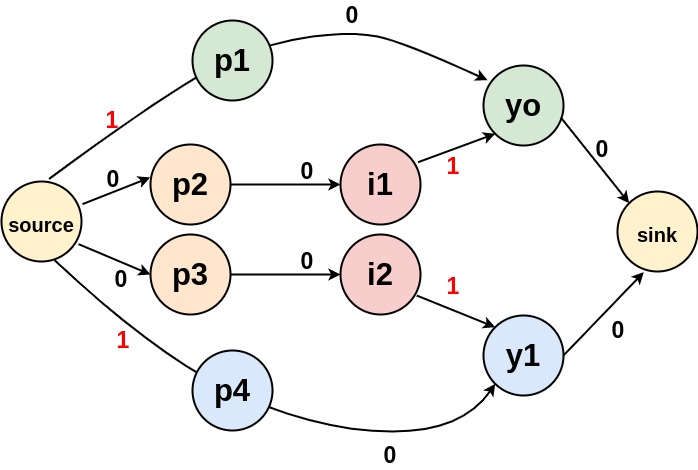}
         \caption{}
         \label{fig:ast_final_retimed}
     \end{subfigure}
    \caption{(a) AST before retiming, (b) \textcolor{black}{ HLS-model with back edge}, (c) HLS-model after retiming (retiming labels shown in parenthesis), (d) Final circuit after removing dummy nodes and back edge}
     \label{fig:retiming}
     \end{figure*}

\section{The Proposed MaskedHLS Flow}
\label{section:proposedflow}
A common approach towards masking cryptographic implementations is to replace the unmasked operations in the overall implementation with \textit{masked gadgets}. The state-of-the-art masked gadgets in DOMAND, HPC1, HPC2, and COMAR as discussed in Section \ref{subsection:gadgets} have locations for the insertion of registers. These masking gadgets at hardware and software differ only in the presence of registers for the hardware masking case. These registers ensure glitch-resistant masking security. Thus, in conversion from software to hardware-masked circuits we need to annotate the software-masked code with directives for the placement of registers. In VivadoHLS we could realize this using a template class $template <class T> T ~reg(T x)$ and later using it akin to a function call. In addition to inserting registers in specific locations, we also need to identify an optimal number of pipelined states and the registers need to be placed to balance parallel paths as discussed above. This can't be done by any existing HLS tools. 
 
Hence, we propose an HLS tool called MaskedHLS that makes this preservation of side-channel security the primary focus. 

\begin{figure}[htb]
    \centering
\includegraphics[scale=0.40]{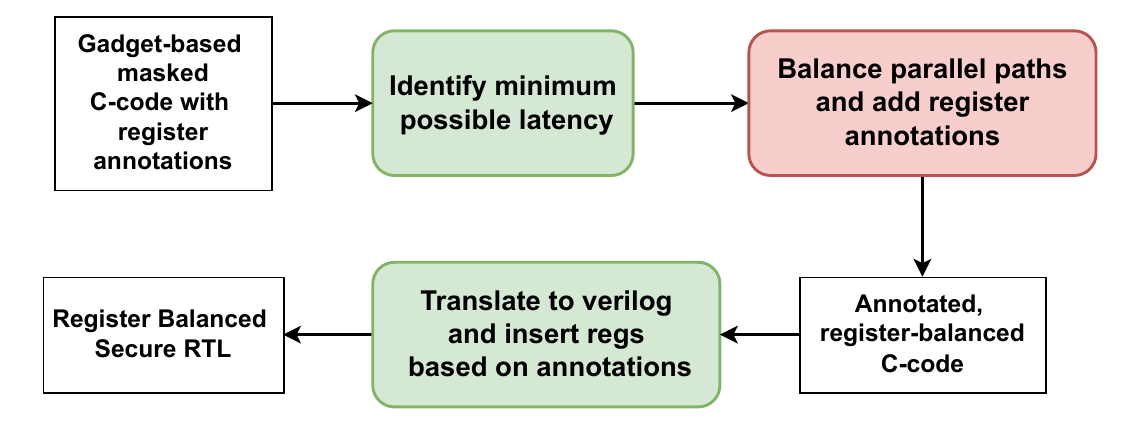}
    \caption{\textcolor{black}{Flow of MaskedHLS tool}}
    \label{fig:MaskHLSFlow}
\end{figure}

The input of MaskedHLS is a software implementation of a cryptographic algorithm that has been masked using gadgets. The gadgets have locations for register insertion for glitch-robust masking and these locations are indicated in the input software implementation using annotations. Given these inputs, MaskedHLS identifies the minimum possible pipeline stages in the circuit satisfying all necessary register requirements indicated by the annotations. In the next step, the register balancing procedure in MaskedHLS identifies all locations in parallel paths where registers need to be added. This produces an annotated C code. On this annotated C code, MaskedHLS performs a one-to-one translation into RTL code with registers inserted in all the places required by masking as well as balancing. Finally, MaskedHLS will create a pipelined RTL design. The overall flow of MaskedHLS is shown in Figure \ref{fig:MaskHLSFlow}. The steps are discussed in detail below. 

\subsection{Register Balancing at Behavioural Level} 
\label{subsection: regBalancing}
Given an unmasked software implementation in C/C++, the masked software is obtained by replacing the non-linear operations with the corresponding gadgets according to the masking scheme. The masking gadget/scheme specifies where registers must be inserted to maintain PSCA security in the corresponding hardware. These locations are indicated by annotations in the C/C++ input as $\langle lhs ~of~ operation \rangle$ = $reg(\langle rhs ~of~ operation \rangle)$. We need to put a register in those locations and balance parallel paths  automatically, with minimum pipelined stages. For this, the annotated C code is converted into an Abstract Syntax Tree (AST).

This AST has the same structure as the graph definition of the sequential circuits described in Section \ref{subsection:retimingbasics}. We develop a method that creates a special model of the AST and utilizes retiming logic on it to achieve the above goal. 

\subsection{Creating the HLS Model from the AST}
\label{subsection:creatingHLSModel}

Let us consider that the target clock period is $c$ in hardware implementation. For a given software code, the target clock period is always known. The AST is modified as follows to  create the \textit{HLS  model}:
\begin{itemize}
    \item \textit{Adding source and sink nodes: }A source node is added to the AST for all the inputs with edge weights 0, and a sink node is added for all the output nodes with edge weights 0.
    \item \textit{Adding a back-edge: }A directed edge is added from the sink node to the source node.
\end{itemize}
Such a model will allow us to add the additional registers in the back edge, and later, balancing will move them into the desired locations.
In the created \textit{HLS model}, we make the following changes to enable \textit{register balancing}:
\begin{itemize}
    \item \textit{Adding dummy nodes: }After each node $\nu \in V$, which has an annotation for a register insertion succeeding it, a \textit{dummy-node} $\nu^\prime$ is inserted. 
    \item \textit{Assigning computational delay to nodes: } The nodes $\nu$, after which registers must be added, and the dummy nodes $\nu^\prime$ are assigned computational delay of $d(\nu) = c$ and $d(\nu^\prime) = c$ respectively.
    All other nodes $u \in V$ apart from the ones assigned a computational delay of $c$ in the previous step are assigned computational delay $d(u) = 0$.
\end{itemize}  
  Any path $p$ involving the edge $e_{\nu,\nu^\prime}$, will have a delay of $2c$. Therefore, such a path will fail to meet the target clock period of $c$. Hence, a register must be inserted in that path at the location between $\nu$ and $\nu^\prime$ to meet the Critical Path Constraint (CPC) required for minimum period global Retiming as defined in Section. \ref{subsection:retimingbasics}.  By virtue of retiming, a register will be added in the parallel paths as well.

For the example in Listing \ref{lst:domand1}, a \textit{HLS model} is constructed in Figure \ref{fig:ast_retiming}.
One dummy node is inserted following each white-colored node (cross-domain nodes) and colored white as in Figure \ref{fig:HLS_model_back}. Assuming the target clock period is 1, all white-colored nodes are assigned the computational delay $d(\nu) = 1$. For all other nodes, the computational delay $d(u) = 0.$

\subsection{Finding the maximum number of register annotations in a path\label{SSS:maxReg}} 
Among all paths in the \textit{HLS model} between the source node and the sink node, the maximum number of annotations for register insertion is identified using a Depth First Search (DFS). We call it the \textit{maximum extra regs} in a path.
Once the \textit{HLS model} is converted to RTL by our HLS tool, we need to add these many registers in all parallel paths between source and sink. Therefore, this \textit{maximum extra regs} will determine the latency of the generated RTL. We will add those extra registers as weight in the back edge between the source and the sink. 
For all other edges, the edge weight is assigned to zero. It may be noted that the \textit{HLS model} is obtained from the software code, which initially had no register.

\begin{table}[]\caption{Feasibility constraints for the Circuit in Figure \ref{fig:HLS_model_back}}
    \centering
    \begin{tabular}{c c}
        $r(i2) - r(p3) <= 0$  & $r(i1) - r(p2) <= 0$\\
        $r(y1) - r(p4) <= 0$ & $r(y0) - r(p1) <= 0$\\
        $r(p4) - r(source) <= 0$ & $r(p3) - r(source) <= 0$\\
        $r(p2) - r(source) <= 0$ & $r(p1) - r(source) <= 0$\\
        $r(sink) - r(y1) <= 0$ &  $r(sink) - r(y0) <= 0$\\
        $r(source) - r(sink) <= 1$ & $r(d2) - r(i2) <= 0$\\
        $r(d1) - r(i1) <= 0$   
    \end{tabular}
    \label{tab:feasibilityconstraints}
\end{table}

    \begin{table}[]\caption{Critical path constraints for  Figure \ref{fig:HLS_model_back}}
        \centering
        \begin{tabular}{c c}
            $r(p4) -  r(d2) <= 0$ & $r(p4) -  r(d1) <= 0$\\
            $r(p3) - r(p4) <= 1$ & $r(p3) - r(p2) <= 1$\\
            $r(p3) - r(p1) <= 1$ & $r(p3) - r(i1) <= 1$\\
            $r(p3) - r(y1) <= -1$ & $r(p3) - r(y0) <= 1$\\
            $r(p3) - r(source) <= 1$ & $r(p3) - r(sink) <= -1$\\
            $r(p3) - r(d2) <= -1$ & $r(p3) - r(d1) <= 1$\\
            $r(p2) - r(p4) <= 1$
        \end{tabular}
        \label{tab:criticalpathconstraints}
    \end{table}
    
\subsection{Calculation of retiming constraints}
\label{subsection:calcretimingconst}

\textcolor{black}{For each node, we consider the retiming label $r(\nu)$. Feasibility constraints for each edge $e_{u,\nu}$ are calculated. For the HLS model in Fig. \ref{fig:HLS_model_back}, the Feasibility Constraints are shown in Table \ref{tab:feasibilityconstraints}. Critical Path constraints for each path from $u$ to $\nu$ such that $D(u, \nu) > 1$ computed. For the HLS model in Fig. \ref{fig:HLS_model_back}, some of the critical path constraints are shown in Table \ref{tab:criticalpathconstraints}.}

\begin{figure}
    \centering
  \includegraphics[scale=0.25]{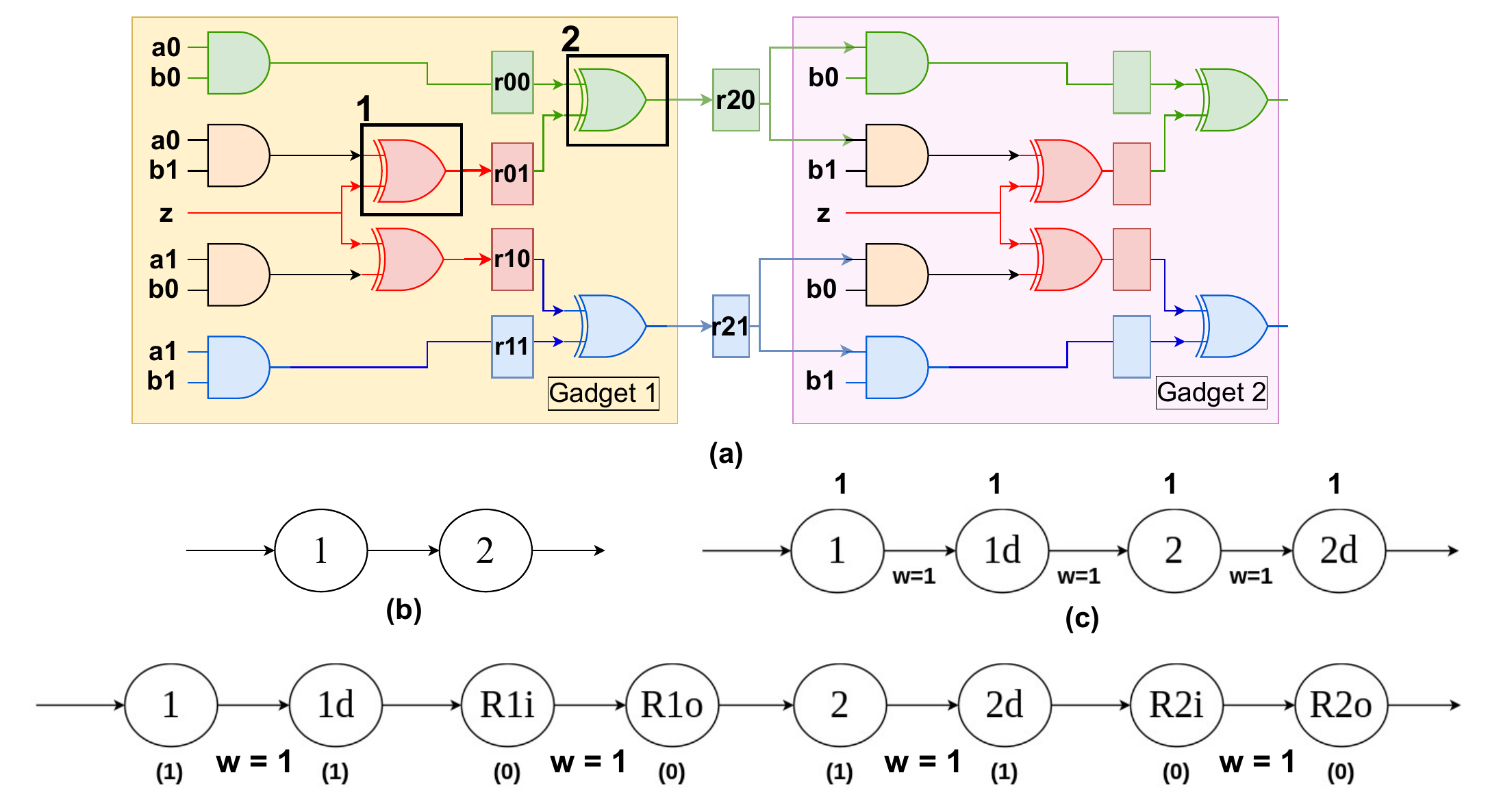}
    \caption{(a) The DOMAND-composed circuit. (b) Part of a circuit with two register insertions in series. (c) Register insertion in series. (d) Register insertion in series using an extra register.}
    \label{fig:series}
\end{figure}

\subsection{Inserting Registers in Series} 
\label{SS:reginSeries}
In a gadget based masking a situation may thus arise where in a single path there are two locations  where registers need to be inserted. Consider the two DOMAND gadgets composed with each other in Figure \ref{fig:series}a. To make the Gadget1-Gadget2 combination \textit{composable}, the registers $r20$ and $r21$ must be inserted as shown in Figure \ref{fig:series}b. The target clock is $c=1$. If we want to insert registers after gate $1$ and gate $2$ then we have to specify retiming constraints on both of them, thus, inserting a dummy node after each of them, as shown in Figure \ref{fig:series}c. Now the $d(\nu)$ values for the nodes become $d(1) = 1, d(1d) = 1, d(2) = 1, d(2d) = 1$. The $D(u, \nu)$ values are now: $D(1, 1d) = 2, D(1d, 2) = 2, D(2, 2d) = 2$. Thus, these $3$ edges $e_{1, 1d}$, $e_{1d, 2}$ and $e_{2, 2d}$  violate the \textit{Critical Path Constraints}. Hence $3$ registers are placed into the circuit at locations where $w=1$ in Figure \ref{fig:series}c. Here, the register between nodes $1d$ and $2$ is not needed, and as a result, the circuit is not balanced with minimum latency. The \textit{Adding Dummy Nodes} step from Subsection \ref{subsection:creatingHLSModel} is updated by the addition of these steps to address this issue: 
\begin{itemize}
    \item A redundant register at the edge between $1d$ and $2$ is deliberately inserted into the HLS model. This causes the critical path from $1d \rightarrow 2$ to break into two paths that meet the target clock $c=1$. This register is later removed after retiming.
    \item To ensure that this register is not moved by retiming, it is \textit{locked} with two nodes at its input and output, respectively. After the dummy node $1d$, two other nodes $R1i$ and $R1o$ are inserted. Similarly for node $2d$. 
    \item \textit{Register-lock} constraints are added for each $R$ . $r(RIn) == r(ROut)$. This constraint ensures that the number of registers moved into the edge $e_{RIn, ROut}$ equals the number of registers moved out of this edge. This means registers can move across this edge without affecting the existing edge weight; thus, it locks the register. 
\end{itemize}

This results in the AST in Figure \ref{fig:series}d and the subsequent steps can be performed on it.

\subsection{Finding the Retiming Labels}

\textcolor{black}{To find the values of \textit{retiming labels} that satisfy these constraints,  we construct a constraint graph  as follows: }
    \begin{enumerate}
        \item \textcolor{black}{For each \textit{retiming label} $r(\nu)$, a node $\nu ^\prime$  is created.}
        \item \textcolor{black}{If $N$ is the number of nodes in the circuit, a $N+1^{th}$ node is created.}
        \item \textcolor{black}{For each inequality $r(u) - r(v)\le k$, an edge $v^\prime \rightarrow u^\prime$ from node $v^\prime$ to node $u^\prime$ of weight $k$ is drawn. It is possible that $k < 0$ for some set of retiming labels as shown in the above example. }

        \item \textcolor{black}{For each node $v^\prime \in V^\prime$, an edge $N+1 \rightarrow v^\prime$ from the node $N+1$ to the node $v^\prime$ with weight 0 is drawn. At this point, the graph is guaranteed to not contain any negative edge cycle as shown in Lemma \ref{lemma:negativeweightcycle}.}
    \end{enumerate} 

Using Lemma \ref{lemma:shortestpath}, the shortest path from $N+1$ to any node  $v^\prime$ will give the correct \textit{retiming label} corresponding to $v^\prime$.
Since, there are no negative weight cycles in the constraint graph $G^\prime$ (as shown in  \ref{lemma:negativeweightcycle}), we can apply Dijkstra's single-source shortest path algorithm to obtain the  \textit{retiming labels} $r(v)$. The retiming labels obtained as a solution for the HLS model in Fig. \ref{fig:HLS_model_back} are shown (within parenthesis) for each vertex in Fig. \ref{fig:HLS_model_final}. The retiming labels satisfying all the constraints will give us the correct locations in the circuit where registers have to be inserted. The register balanced design obtained using these $r(v)$ values is shown in Fig. \ref{fig:ast_final_retimed}. After retiming, all dummy nodes and edges are removed. MaskedHLS will generate a register balanced C code from this HLS model. The register balanced, annotated C code, corresponding to Listing \ref{lst:domand1} is shown in Listing \ref{lst:domand2}. It may be noted that registers are added in the designated locations and in all parallel paths.

\subsection{Generating Pipelined RTL Design}
The last phase of maskedHLs takes this register-annotated C-code obtained in the previous step and generates RTL from it. The translation of this C-code to Verilog is done via a one-to-one mapping from the AST of the C to the RTL. In addition, the tool places registers according to the annotations in the C-code. Our tool does not apply any optimization in this phase. This effectively creates a pipelined RTL design with the number of pipeline stages equal to the maximum number of registers that have been added in a path (as identified in Sub-Section \ref{SSS:maxReg}).

\section{Correctness of MaskedHLS}
\label{section:proofs}
Here, we prove the correctness of our register balancing approach in MaskedHLS. We also show that MaskedHLS add the minimum number of pipelined stages. 

\begin{lemma}\label{lemma:shortestpath}
The shortest path from $N+1$ to $v^\prime$ in the constraint graph will give the \textit{retiming label} satisfying the constraints.
\end{lemma}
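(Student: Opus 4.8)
The plan is to recognize the constraint graph $G'$ as the standard encoding of a system of difference constraints and to show that the vector of single-source shortest-path distances rooted at node $N+1$ is a feasible assignment of retiming labels. Concretely, I would define, for every vertex $\nu$ of the circuit, the candidate label
\[
r(\nu) := \delta(N+1, \nu'),
\]
where $\delta(N+1, \nu')$ is the weight of a shortest path from the auxiliary node $N+1$ to node $\nu'$ in $G'$. The first thing to check is that this quantity is well-defined and finite: every node $\nu'$ is reachable from $N+1$ via the weight-$0$ edges added in step~4 of the construction, and by Lemma \ref{lemma:negativeweightcycle} the graph $G'$ contains no negative-weight cycle, so no shortest-path distance diverges to $-\infty$. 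Hence each $r(\nu)$ is a well-defined integer.

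The core of the argument is the triangle inequality satisfied by shortest-path distances. By construction, every inequality $r(u) - r(v) \le k$ in the retiming system (whether a feasibility constraint from~(3) or a critical-path constraint from~(5)) is represented in $G'$ by exactly one edge $v' \to u'$ of weight $k$. Because $\delta(N+1, \cdot)$ is a shortest-path distance, it must satisfy
\[
\delta(N+1, u') \le \delta(N+1, v') + w(v' \to u') = \delta(N+1, v') + k.
\]
Substituting the definition of $r$ gives $r(u) \le r(v) + k$, i.e.\ $r(u) - r(v) \le k$, which is precisely the original constraint. Since this holds for every edge, and every constraint corresponds to an edge, the shortest-path labels satisfy all feasibility and critical-path constraints simultaneously, establishing the claim.

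The step I expect to require the most care is the bookkeeping of edge orientation and weight, since the \emph{direction} of the inequality must match the direction of the edge for the triangle inequality to yield the stated bound rather than its reverse. I would verify this explicitly on a sample constraint from Tables \ref{tab:feasibilityconstraints} and \ref{tab:criticalpathconstraints} --- for example, confirming that $r(i2) - r(p3) \le 0$ produces the edge $p3' \to i2'$ of weight $0$, and that the triangle inequality then reproduces exactly this bound. The remaining ingredient, the absence of negative cycles, is not reproved here but imported from Lemma \ref{lemma:negativeweightcycle}; together with reachability from $N+1$ it is exactly what guarantees that the single-source shortest-path computation returns finite labels, which can then be read off directly as the register placement.
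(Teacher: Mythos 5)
Your proposal is correct, and it proves the lemma by the standard difference-constraints argument: every constraint $r(u)-r(v)\le k$ appears as an edge $v'\to u'$ of weight $k$, shortest-path distances from the auxiliary source satisfy the triangle inequality $\delta(N+1,u')\le \delta(N+1,v')+k$ across every edge, and hence the distance vector satisfies every constraint simultaneously; finiteness follows from reachability via the weight-$0$ edges plus the absence of negative cycles (Lemma \ref{lemma:negativeweightcycle}). The paper reaches the same conclusion but packages it as an induction with an explicit case split (whether or not the shortest path to $v'$ passes through a given predecessor $u'$), verifying the inequality only for particular configurations such as the direct edge from $N+1$ and two-edge paths, and it implicitly assumes $r(u')=-m$ with $m\ge 0$ along the way. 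Your uniform triangle-inequality argument is cleaner and strictly more general: it covers every edge of $G'$ in one step rather than re-deriving the inequality case by case, and it makes explicit the two hypotheses (reachability and no negative cycle) that the paper's induction uses only tacitly. The one practical point worth flagging, which affects the paper rather than your proof, is that the constraint graph has negative edge weights, so the shortest paths must be computed with Bellman--Ford rather than Dijkstra as the paper states; your argument is agnostic to the algorithm and only needs the distances themselves.
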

\begin{proof} (By induction)
Base: There is a direct edge form $N+1$ to each vertex $v^\prime$ with weight $0$. If this edge is the shortest path from $N+1$ to $v^\prime$, then the retiming label  $r(v^\prime) = 0$. It means there will be no registers moved across $v^\prime$. 

Now, assume the shortest path to $v^\prime$ is through $u^\prime$, i.e., $N+1 \xrightarrow{0} u^\prime \xrightarrow{-k} v^\prime$ is the shortest path. The edge $e_{u^\prime, v^\prime}$ came from the retiming constraint $r(v^\prime) - r(u^\prime) \le -k$. The retiming label of $u^\prime$ must be $0$. So the value of the shortest path to $v^\prime$, i.e., $-k$ will satisfy the constraint. 

Inductive Step: Now assume we have another vertex $u^\prime$ in the constraint graph with a direct edge to $v^\prime$ with the edge weight $w_{u^\prime, v^\prime} = l$. Let the shortest path to $u^\prime$ of length $-m$ (from the base case $m \ge 0$) already exist and be equal to the value of the retiming label of $u^\prime$: $r(u^\prime) = -m$.
Therefore given this node $u^\prime$, the shortest path from $N+1$ to $v^\prime$ either passes through $u^\prime$ or does not. 

Case I: The shortest path from $N+1$ to $v^\prime$ is the path $N+1 \xrightarrow{0} v^\prime$. The direct edge from $N+1$ to $v^\prime$ is the shortest path. Therefore, $w(N+1 \xrightarrow{-m} u^\prime \xrightarrow{l} v^\prime) \ge w(N+1 \xrightarrow{0} v^\prime)$. $\implies - m + l \ge 0$.
Putting the value of $r(u^\prime) = - m$ in this equation we get: $ l + r(u^\prime) \ge 0$$ \implies 0 - r(u^\prime) \le l$$\implies r(v^\prime) - r(u^\prime) \le l$ which is the constraint on the vertex $v^\prime$. Hence, the constraint is satisfied in this case.  

Case II: The shortest path from $N+1$ to $v^\prime$ is $N+1 \xrightarrow{-m} u^\prime \xrightarrow{l} v^\prime$. So the shortest path's weight is $- m + l$. Here, $r(v^\prime) = - m + l$. Now, we have to show that the constraint on $v^\prime$, $r(v^\prime) - r(u^\prime) \le l$ is satisfied with the retiming constraints. 
Putting the value $r(v^\prime) = - m + l$ in the constraint's RHS we have 
$r(v^\prime) - r(u^\prime) \implies -m + l - (-m) \implies l$ which is $\le l$. Hence, the constraint for this case is satisfied.
\end{proof}

To find the correct set of retiming labels, we have to find a solution to the constraint graph using the shortest path. For the shortest path to give a solution which is the correct retiming labels, the graph should contain no negative weight cycles as otherwise no solution can be reached using the shortest-path approach. 
\begin{lemma}\label{lemma:negativeweightcycle}
The constraint graph contains no negative weight cycles.
\end{lemma}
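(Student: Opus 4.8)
The plan is to prove the contrapositive-friendly statement that \emph{every} directed cycle in the constraint graph $G^\prime$ has non-negative total weight, from which the absence of negative-weight cycles is immediate. The key observation is that every edge of $G^\prime$ encodes an inequality of one uniform shape. By the construction rule, an edge $v^\prime \rightarrow u^\prime$ of weight $k$ is drawn exactly when the retiming system contains the inequality $r(u) - r(v) \le k$; equivalently, traversing any edge $a^\prime \rightarrow b^\prime$ corresponds to a constraint in which $r(\text{head}) - r(\text{tail})$ is bounded above by the edge weight. I would first check that both families of constraints fit this shape: the Feasibility Constraints $r(\nu) - r(u) \le w(e_{u,\nu})$ yield edges $u^\prime \rightarrow \nu^\prime$, and the Critical Path Constraints $r(u) - r(\nu) \le w(u,\nu) - 1$ yield edges $\nu^\prime \rightarrow u^\prime$; in both cases the orientation convention ``head minus tail $\le$ weight'' holds, so the argument applies uniformly to every edge regardless of its origin.

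Next I would dispose of the auxiliary node $N+1$. By construction it carries only outgoing edges (the weight-$0$ edges to each $v^\prime$) and receives no incoming edges, so $N+1$ cannot lie on any directed cycle; every cycle of $G^\prime$ is therefore supported entirely on the vertices $\{\,v^\prime : \nu \in V\,\}$. Given any such cycle $v_0^\prime \rightarrow v_1^\prime \rightarrow \cdots \rightarrow v_{m-1}^\prime \rightarrow v_0^\prime$ with edge weights $w_0, w_1, \ldots, w_{m-1}$, I would sum the associated constraints $r(v_{i+1}) - r(v_i) \le w_i$ over $i = 0,\ldots,m-1$ with indices taken modulo $m$. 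On the left-hand side each $r(v_i)$ appears once with a plus sign and once with a minus sign, so the sum telescopes to $0$; the right-hand side is the total weight of the cycle. This yields $0 \le \sum_{i=0}^{m-1} w_i$, i.e. every cycle weight is non-negative, giving the result. This also retroactively justifies applying the single-source shortest-path computation used in Lemma \ref{lemma:shortestpath}.

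I do not expect a serious obstacle, since the heart of the proof is a single telescoping sum. The only points requiring genuine care are (i) verifying that the \textit{FC} and \textit{CPC} edges truly share the orientation convention so that the summed inequalities telescope cleanly, and (ii) confirming that $N+1$ is excluded from all cycles so that the cycle actually ``closes up'' back to its start. I would also remark that the register-lock equalities $r(RIn) = r(ROut)$ introduced in Subsection \ref{SS:reginSeries}, being a pair of inequalities $r(RIn) - r(ROut) \le 0$ and $r(ROut) - r(RIn) \le 0$, likewise fit the uniform shape and create only a weight-$0$ cycle; hence they are consistent with the claim and do not threaten non-negativity.
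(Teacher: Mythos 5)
Your telescoping argument has a circularity problem. The inequalities $r(v_{i+1}) - r(v_i) \le w_i$ that you sum around the cycle are not established facts about real numbers; they are the \emph{specification} of the system to be solved, and the retiming labels $r(\cdot)$ have no values until a feasible assignment is exhibited. Summing them and concluding $0 \le \sum_i w_i$ is legitimate only if some assignment simultaneously satisfies every constraint on the cycle --- that is, you have proved ``if the difference-constraint system is feasible, then every cycle has non-negative weight.'' But feasibility is exactly what Lemma \ref{lemma:negativeweightcycle} is needed to establish: the absence of negative cycles is what licenses the shortest-path construction of Lemma \ref{lemma:shortestpath}, which in turn produces the feasible labels. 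As written, your proof assumes its own conclusion. (Your side observations are fine: the uniform ``head minus tail $\le$ weight'' orientation shared by the FC and CPC edges, the fact that $N+1$ has no incoming edges and so lies on no cycle, and the weight-$0$ cycle created by a register lock are all correct.)

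The paper avoids this by arguing from the structure of the \textit{HLS model} rather than from the unknowns: since the input circuit is acyclic, every cycle in the constraint graph projects onto a cycle in the HLS model that necessarily traverses the back edge $e_{sink,src}$; deleting that edge leaves a source-to-sink path whose weight counts physical registers and is hence non-negative, contradicting a negative cycle weight. To repair your route you would first have to exhibit a concrete witness assignment --- e.g., the labels corresponding to the placement of the \textit{maximum extra regs} value $m$ on the back edge, as discussed in Subsection \ref{SSS:maxReg} and Lemma \ref{lemma:3} --- and check that it satisfies every FC and CPC; only then does the telescoping sum become a valid deduction. That is a genuine additional piece of work rather than a formality, because for an insufficient back-edge weight the system really would be infeasible and the constraint graph really would contain a negative cycle.
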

\begin{proofidea}
We start by considering a hypothetical negative weight cycle $C$ in the constraint graph, the weight of which is: $w_C = \sum_{i} w_{i, i+1}$, where $w_{i, i+1}$ denotes the weight of edge $e_{i, i+1}$ in $C$.
    We observe that for any cycle $C$ in the \textit{HLS model}, there is one or more (due to critical path constraints there may be multiple edges in the constraint graph corresponding to one edge between two nodes  in the \textit{HLS model}) cycle in the constraint graph derived from that cycle. Also, since there are no loops in the input circuit, thus the only cycles in the \textit{HLS model} will contain the edge $e_{sink, src}$. Hence, for each cycle in the constraint graph, there exists an equivalent path in the HLS model from $src \rightarrow sink$. The weights along this path represent the number of registers moved across the vertices in the path, which is the total number of registers contained in the path.
    Since a circuit cannot have a path from input to output with a negative number of registers, the sum of weights along the path must be non-negative.
    By removing the edge $e_{sink, src}$ from cycle $C$, we obtain a path from source to sink in the HLS model. The sum of weights along this path must also be non-negative.
    However, the weight of cycle $C$ is negative. This leads to a contradiction.
    Thus, our initial assumption of the existence of a negative weight cycle in the constraint graph is false.
\end{proofidea}
At this point, it is noteworthy that our \textit{register balancing} procedure will always terminate with a solution. It will never be the case that an infeasible set of constraints is generated for which there is no solution possible. As discussed in Section \ref{SSS:maxReg}, we identify the maximum number of registers are needed in a path and add that as weight in the edge between the source and sink. These registers are sufficient to satisfy all constraints.  Below in Lemma \ref{lemma:3} we prove the termination of our procedure. 

\begin{lemma}\label{lemma:3}
Register Balancing will always terminate with a solution resulting in the same latency as the number of registers inserted into the back edge.
\end{lemma}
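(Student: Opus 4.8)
The plan is to establish the statement in two parts: first that the register-balancing procedure terminates with a feasible solution, and second that the latency of the resulting circuit equals the weight $W$ placed on the back edge, where $W$ denotes the \emph{maximum extra regs} computed in Subsection \ref{SSS:maxReg}.

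For termination, I would invoke the two preceding lemmas directly. A system of difference constraints of the form $r(u)-r(v)\le k$ is feasible exactly when its constraint graph has no negative-weight cycle, and Lemma \ref{lemma:negativeweightcycle} establishes precisely this. Lemma \ref{lemma:shortestpath} then guarantees that the shortest-path labels $r(\cdot)$ from node $N+1$ satisfy every feasibility and critical-path constraint simultaneously. Since the constraint graph is finite and free of negative cycles, the shortest-path computation halts and returns such labels, so the procedure always terminates with a valid retiming.

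For the latency claim, the key tool is the retiming path-weight identity: for any forward path $P$ from the source to the sink, $w_r(P)=r(\mathit{sink})-r(\mathit{src})+w(P)$. Because every forward edge was initialized to weight $0$ in Subsection \ref{SSS:maxReg}, we have $w(P)=0$, hence $w_r(P)=r(\mathit{sink})-r(\mathit{src})=:k$ \emph{independently of $P$}. This shows the design is balanced: every source-to-sink path carries the same number $k$ of registers, so the pipeline depth (latency) is well defined and equal to $k$. It remains to show $k=W$, which I would do by sandwiching $k$ between $W$ from below and $W$ from above. For the lower bound, each annotation creates a $\nu\to\nu'$ segment of delay $2c>c$, so the corresponding critical-path constraint forces $r(\nu')-r(\nu)\ge 1$, i.e. at least one register on that segment; chaining these along the maximum-annotation path shows $k$ is at least the maximum number of annotations on any path, which is $W$ by definition. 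For the upper bound, the feasibility constraint on the back edge gives $w_r(e_{\mathit{sink},\mathit{src}})=r(\mathit{src})+W-r(\mathit{sink})\ge 0$, i.e. $k=r(\mathit{sink})-r(\mathit{src})\le W$. Combining the two bounds yields $k=W$ and $w_r(e_{\mathit{sink},\mathit{src}})=0$, so after deleting the back edge the remaining forward circuit has latency exactly $W$, the number of registers originally inserted into the back edge.

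The main obstacle I anticipate is making the lower bound airtight, i.e. arguing that the critical-path constraints really do pin each path to \emph{at least} its annotation count (so that $k\ge W$), rather than merely bounding $k$ above. This requires carefully chaining the per-segment CPCs along the maximum-annotation path and confirming that retiming cannot satisfy two distinct in-series annotations with a single shared register — which is exactly what the locked register pairs $R1i,R1o$ and the constraint $r(R\mathit{In})=r(R\mathit{Out})$ from Subsection \ref{SS:reginSeries} are designed to prevent. I would therefore treat the in-series case separately, noting that the deliberately inserted locked registers contribute identically to every affected path and are removed after retiming, so they neither break the equality of forward-path weights nor alter the final count $k=W$.
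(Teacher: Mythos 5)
Your proposal is correct and shares the paper's central device --- telescoping the retiming identity $w_r(e_{u,\nu})=r(\nu)+w(e_{u,\nu})-r(u)$ along a source-to-sink path so that, with all forward edges initialized to weight $0$, every path acquires the same retimed weight $r(\mathit{sink})-r(\mathit{src})$ --- but you finish the argument differently, and more carefully. The paper proceeds by contradiction: it supposes some path carries $m+k$ registers with $k>0$, derives $r(\mathit{sink})-r(\mathit{src})=m+k$, and then simply \emph{asserts} that $r(\mathit{src})=-m$ and $r(\mathit{sink})=0$ ``since we must move the registers from the $sink\rightarrow src$ edge into the path,'' which is the very fact in need of proof. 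You instead sandwich $k=r(\mathit{sink})-r(\mathit{src})$: the feasibility constraint $w_r(e_{\mathit{sink},\mathit{src}})\ge 0$ on the back edge gives $k\le W$, and chaining the per-annotation critical-path constraints $r(\nu')-r(\nu)\ge 1$ along the maximum-annotation path gives $k\ge W$. This closes the hole in the paper's proof and, as a bonus, your observation that $w_r(P)$ is independent of $P$ establishes that \emph{all} parallel paths are balanced at exactly $W$, which the paper's contradiction argument only shows for the hypothesized over-weight path. One caveat applies to both proofs: the locked registers introduced for in-series annotations (Subsection \ref{SS:reginSeries}) give some forward edges a nonzero initial weight, so the claim $w(P)=0$ for every forward path is not literally true in that case; your closing remark acknowledges this but the assertion that the locked registers ``contribute identically to every affected path'' would need to be argued (parallel paths that bypass a locked edge carry none of its weight). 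Since the paper's own proof silently makes the same zero-initial-weight assumption, this is a shared limitation rather than a defect of your approach relative to the paper's.
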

\begin{proof}
Let the circuit obtained via register balancing using our method be $C$. Let the \textit{maximum extra regs} value we have obtained after the DFS of the AST with the annotations be $m$. For a minimal latency circuit, we need to have a circuit with $m$ registers in all parallel paths. Say our circuit $C$ has $m + k$ registers in a parallel path after register-balancing. Then, our circuit $C$ will have an un-optimal latency. We have to prove that such a scenario will never be reached by our \textit{register balancing} procedure. So let us assume there is a path $p$ after retiming with a weight $w(p) = m+k$ for some $k>0$. Then for this path $src \rightarrow v_i \rightarrow v_{i+1} \rightarrow ...\rightarrow v_N \rightarrow sink$, following from the convention of retiming rules in Section \ref{subsection:retimingbasics} where weight of each edge before retiming is $w(e_{j, j+1})$ and after retiming are $w_r(e_{j, j+1})$, we have:

\begin{center}
    $w_r(e_{src, v_i}) = r(v_i) - r(src) + w (e_{src, v_i})$\\
     $w_r(e_{v_i, v_{i+1}}) = r(v_{i+1}) - r(v_i) + w (e_{v_i, v_{i+1}})$ \\
     $\ldots$\\
     $w_r(e_{v_N, sink}) = r(sink) - r(v_N) + w (e_{v_N, sink})$
\end{center}
Adding them all, we get the weight of the path $w(p)$ to be,
\begin{center}
    $w_r(e_{src, v_i}) + w_r(e_{v_i, v_{i+1}}) + \ldots + w_r(e_{v_N, sink}) = w(p) = r(sink) - r(src)$ \\
    $\implies r(sink) - r(src) = m+k$
\end{center}
Since we must move the registers from the $sink \rightarrow src$ edge into the path $p$ via retiming, therefore $r(src) = -m$. $r(sink) = 0$. Therefore, following from above, 
$r(sink) - r(src) = m+k \implies 0 - (-m) \neq m + k$.
Which is a contradiction. 
Thus, our initial assumption is wrong. 
Hence, retiming results in a circuit with an optimal latency.
\end{proof}

\begin{lemma}
Retiming does not change the PSCA security of the circuit.
\end{lemma}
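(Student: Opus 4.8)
The plan is to split the claim into a functional part and a security part. The functional part is immediate: by the retiming relation (2), the transformation only redistributes register weights along edges while leaving the combinational function computed on every source-to-sink path unchanged, so input/output equivalence is the classical guarantee of retiming \cite{parhi2007vlsi}. What is specific to masking, and what I would spend the proof on, is showing that this redistribution (a) never removes or relocates the \emph{security-critical} registers mandated by the gadgets, and (b) introduces only balancing registers that cannot create new leakage.

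I would first recall the security notion on which DOM, HPC1, HPC2 and COMAR rely: glitch-robust probing security, where a probe exposes the whole combinational cone feeding it, the cone being cut at register boundaries, and security means any admissible probe set is simulatable independently of the secret. The registers a gadget prescribes are exactly the barriers that truncate glitch propagation at the points where cross-domain (non-affine) shares would otherwise recombine. Hence it suffices to establish (a) these registers survive in their prescribed edges, and (b) every extra register sits on an edge carrying a single share.

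Claim (a) follows from the HLS-model construction of Sections \ref{subsection:creatingHLSModel} and \ref{SS:reginSeries}. Each node after which a gadget register is required is assigned delay $d(\nu)=c$ and followed by a dummy node $\nu^\prime$ with $d(\nu^\prime)=c$, so the edge $e_{\nu,\nu^\prime}$ carries delay $2c>c$ and, by the Critical Path Constraint, must receive a register after retiming; moreover the \emph{register-lock} constraints $r(R_{\mathrm{in}})=r(R_{\mathrm{out}})$ pin any deliberately inserted series register so retiming cannot drift it off its edge. Thus every gadget register lands on exactly the edge the masking scheme specifies, and each gadget keeps its internal glitch barriers intact. For claim (b), note that the HLS-model edges are in one-to-one correspondence with the single-share wires of the software-masked code, and retiming only adds weight to such existing edges; it never fuses two shares onto one wire. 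Adding a register on a single-share wire leaves its logical value and marginal distribution unchanged, merely delaying it one cycle, and in the glitch-robust model this can only \emph{shrink} the cone seen by a downstream probe, never enlarge it. Since Lemma \ref{lemma:3} balances all parallel paths to equal latency, shares still arrive synchronously at every recombination point, so the gadget composability conditions hold exactly as in the unretimed reference design; together with functional equivalence this yields the lemma.

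The step I expect to be the main obstacle is claim (b): formally certifying that a balancing register creates no leakage requires binding the purely structural retiming operation to the security definition. The delicate point is ruling out the creation of a new glitch cone that straddles a domain boundary; I would discharge it with the monotonicity observation that registers only cut propagation cones, combined with the edge-to-single-share bijection that prevents any retiming move from merging a $0$-share with a $1$-share.
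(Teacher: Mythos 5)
Your proposal is correct and follows the same high-level decomposition as the paper's own proof: (i) the gadget-mandated registers are forced onto their prescribed edges by the CPC and are pinned there by the register-lock constraints, and (ii) the only other registers introduced are balancing registers, which do not harm security. The difference is one of depth rather than route. The paper's proof stops at asserting point (ii) --- ``introducing registers at locations other than the locations annotated (balancing registers) does not compromise the security'' --- whereas you actually argue it, by grounding the claim in the glitch-robust probing model: a register on an existing single-share wire can only truncate the combinational cone exposed to a downstream probe, and since retiming never merges two shares onto one wire, no new cross-domain recombination point can arise. You also explicitly tie latency equalization (Lemma \ref{lemma:3}) to the synchronous arrival of shares at recombination points, which the paper leaves implicit. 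This extra work is exactly where the real content of the lemma lies, and your identification of it as ``the main obstacle'' is apt; the paper's version would not survive a demand for a formal security argument, while yours sketches one. The only caveat is that a fully formal discharge of your claim (b) would also need to account for the fact that a new register output is itself a new probe point (its glitch-extended cone is just the stable register input, which was already observable, so security is indeed preserved), but this is a standard observation and does not affect the validity of your argument.
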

\label{lemma:security}
\begin{proof} 
\textcolor{black}{
The retiming procedure only inserts registers at the locations annotated in the input C code and the locations requiring balancing. Introducing registers at locations other than the locations annotated (balancing registers) does not compromise the security. Since we lock all existing registers using register locking constraints, the retiming will not move any existing registers. Therefore, there is no removal, insertion or movement of any circuit components during register-balancing, that can impact the security guaranteed by masking. Hence, retiming does not impact the PSCA security of the circuit.}    
\end{proof}

\subsection{Complexity Analysis}
\label{section:complexity}
\textcolor{black}{Our MaskedHLS tool's complexity is upper bound by the complexity of the register-balancing procedure. The calculation of the $D$ and $W$ matrices together takes $O(n^{3})$ time where $n$ is the number of nodes in the AST of input. This is because they can be obtained using all pairs shortest-path. Following that, the feasibility constraints are obtained for each edge of the graph in $O(n^{2})$ (as the number of edges in a graph is $O(n^{2})$) and critical path constraints for each edge $e_{u,v}$ where $D(u, v) > c$ which is at most $O(n^{2})$. These constraints are then modelled using a constraint graph which is linear in the number of constraints which is $O(n^{2})$. These constraints are solved again using Dijkstra's algorithm on the constraint graph which takes $O(n^{3})$ (i.e., $V+E$, $|V| = n$) here $n$ is the number of nodes in the original HLS model ($n$). Therefore the time complexity of the balancing procedure is $O(n^{3})$ in the number of nodes in the retiming model $n$.}

\section{Experimental Results}
\label{section:experiments}
\subsection{Implementation and Benchmark Details}
\label{subsection:implementation}
The MaskedHLS tool makes use of PyCparser \cite{bendersky2012pycparser} to parse the abstract syntax tree of the input C-code on which the balancing procedure and the one-to-one transformation to RTL are performed.  
We have tested our MaskedHLS tool on four different variants of the  PRESENT Cipher's 4-bit S-box\cite{bogdanov2007present} and Canright's AES-256 S-box \cite{canright2005compactSBox} masked using four different gadgets: the DOMAND gadget, the HPC1 gadget, the HPC2 gadget, and the COMAR gadget. The codebase, examples and scripts of MaskedHLS and the flow are available in github\footnote{Will be made public upon acceptance}.

\begin{table}[]\caption{\textcolor{black}{Results for MaskedHLS}}
    \resizebox{\columnwidth}{!}{%
        \begin{tabular}{|c|ccc|cccc|c|c|c|c|}\hline
            Design  &\#ann\_regs& \#bal\_regs& \#total\_regs & \textcolor{black}{\#C} &\textcolor{black}{\#nodes} & \#RTL & Runtime (s)\\\hline
             PRESENT\_DOMAND &16 &36 &52 & \textcolor{black}{83} & \textcolor{black}{105} & 299 & 0.33 \\
             PRESENT\_HPC1 &32 &68 &100 & \textcolor{black}{84} & \textcolor{black}{169} &454 & 0.40 \\
             \textcolor{black}{PRESENT\_HPC2} &\textcolor{black}{48}&\textcolor{black}{82} &\textcolor{black}{130} & \textcolor{black}{91} &  \textcolor{black}{168} &\textcolor{black}{420} &  \textcolor{black}{0.33}\\
             PRESENT\_COMAR &56 &38 &94  & \textcolor{black}{94}  & \textcolor{black}{209} &515  &0.88 \\\hline
             \textcolor{black}{ AES\_DOMAND} &\textcolor{black}{72} &\textcolor{black}{999}&\textcolor{black}{1071}  &  \textcolor{black}{485} & \textcolor{black}{1308} &\textcolor{black}{5307} &\textcolor{black}{11.22} \\
             \textcolor{black}{ AES\_HPC1} &\textcolor{black}{216} &\textcolor{black}{1689}&\textcolor{black}{1905}  & \textcolor{black}{515} &  \textcolor{black}{1668} &\textcolor{black}{7707} &\textcolor{black}{25.77} \\
             \textcolor{black}{ AES\_HPC2} &\textcolor{black}{432} &\textcolor{black}{1587}&\textcolor{black}{2019}  & \textcolor{black}{481} &  \textcolor{black}{1884} &\textcolor{black}{7875} &\textcolor{black}{64.87} \\
            \textcolor{black}{ AES\_COMAR} &\textcolor{black}{468} &\textcolor{black}{2486}&\textcolor{black}{2954} & \textcolor{black}{495}  & \textcolor{black}{2322} &\textcolor{black}{10261} &\textcolor{black}{119.17} \\\hline
        \end{tabular}
    }

    \label{tab:synthesis}
\end{table}

\subsection{MaskedHLS Synthesis Results}
\label{subsection:synthesisresults}
Table \ref{tab:synthesis} presents the results of MaskedHLS on all the eight test-cases. The runtime of MaskedHLS is dependent on the number of nodes being processed. \textcolor{black}{Specially, MaskedHLS takes an average of 54 seconds on the AES S-box designs with an average of 1795 nodes; and an average of 0.48 Seconds on the PRESENT S-box designs with an average of 422 nodes. AES\_COMAR took a significantly longer time to synthesize using MaskedHLS due to the higher number of constraints generated during register balancing due to a higher number of critical paths in the design for AES\_COMAR compared to the other AES S-box designs. The major part of the time is taken in register balancing.}

In Table \ref{tab:synthesis}, the number of registers annotated initially for gadgets (\#ann\_regs) and the number of additional registers inserted by MaskedHLS for balancing (\#bal\_regs), the total number of registers (\#total\_regs) and the lines of code in input C (\#C) and  RTL (\#RTL) are also shown. As seen in Table \ref{tab:synthesis}, the runtime of MaskedHLS on a 6-core Intel i7-8700 CPU operating at 3.20GHz is less than 1 sec for all PRESENT S-boxes and less than two minutes for all AES S-boxes.
\begin{table}[]\caption{\textcolor{black}{Area and timing overhead comparison with designs without registers.}}
    \resizebox{\columnwidth}{!}{%
        \begin{tabular}{|c|c|c|c|c|c|}\hline
            Design & Area(wo\_reg) & Area(w\_reg) & 
Timing(wo\_reg)  & Timing(w\_reg)\\\hline
            \textcolor{black}{PRESENT\_unmasked} & \textcolor{black}{940.11} & \textcolor{black}{NA} & \textcolor{black}{1.11} & \textcolor{black}{NA}\\
            PRESENT\_DOMAND & 2639.22 & 5546.05 & 1.64 & 0.93 \\
             PRESENT\_HPC1 & 2614.83 & 8815.18 & 1.69 & 0.83  \\
             \textcolor{black}{PRESENT\_HPC2} & \textcolor{black}{3220.92} & \textcolor{black}{10788.05} & \textcolor{black}{1.85} & \textcolor{black}{0.94} \\
             PRESENT\_COMAR & 2892.56 & 8936.05 & 1.82 & 0.98 \\\hline
              \textcolor{black}{PRESENT\_average} & \textcolor{black}{} & \textcolor{black}{2.97x} & \textcolor{black}{} & \textcolor{black}{0.52x} \\\hline
             \textcolor{black}{AES\_unmasked} & \textcolor{black}{55728.13} & \textcolor{black}{NA} & \textcolor{black}{18.99} & \textcolor{black}{NA} \\
             \textcolor{black}{AES\_DOM} & \textcolor{black}{1002202.72} & \textcolor{black}{1841877.19} & \textcolor{black}{28.01} & \textcolor{black}{5.72} \\
             \textcolor{black}{AES\_HPC1} & \textcolor{black}{1004612.10} & \textcolor{black}{3136636.19} & \textcolor{black}{28.94} & \textcolor{black}{4.12} \\
             \textcolor{black}{AES\_HPC2} & \textcolor{black}{1028632.47} & \textcolor{black}{2305685.39} & \textcolor{black}{31.60} & \textcolor{black}{4.50} \\
             \textcolor{black}{AES\_COMAR} & \textcolor{black}{134810.27} & \textcolor{black}{2727581.43} & \textcolor{black}{31.12} & \textcolor{black}{2.44} \\\hline
             \textcolor{black}{AES\_average} & \textcolor{black}{} & \textcolor{black}{6.85x} & \textcolor{black}{} & \textcolor{black}{0.13x} \\\hline
        \end{tabular}
    }
    wo\_regs corresponds to the gadget based masked circuit without registers, w\_reg corresponds to the output of MaskedHLS(gadget based masked circuit with registers according to the masking scheme and balancing registers in parallel paths). Timing is in nanoseconds.
    \label{tab:overhead}
\end{table}

The generated RTLs from MaskedHLS were synthesized to netlists using Synopsys Design Compiler (DC) using the TSL18FS120 cell library from Tower Semiconductor Ltd. at 180nm technology node. \textcolor{black}{To ensure that the downstream synthesis tool does not impact the security of the generated RTL via optimizations, we added commands (like $set\_dont\_touch$) in the synthesis script}. To compare the area and latency overhead due to balancing, the gadget-based masked c-codes for all designs sans the registers were synthesized to RTL. These, too, were converted to netlist using the Synopsys DC with the same library. The area and latency data from the DC's synthesis report were obtained for both versions of the designs while constraining the circuit to use only and, xor and invert gates and registers wherever necessary. Table \ref{tab:overhead} shows the comparison of total area and timing for all the designs against the versions without registers. It may be observed that the area has increased by 2.97x and 6.85x on an average for PRESENT S-boxes and AES S-boxes respectively, after inserting the register. We  have also added area and timing results for the AES S-box and PRESENT S-box designs in their native form (without masking) to show the area overhead due to masking (first row in each set of results in Table \ref{tab:overhead}). The area overhead of Masking is 5.9x, 9.4x, 11.5x, 9.5x for PRESENT S-box masked using DOM, HPC1, HPC2 and COMAR, respectively. For the Canright's AES S-box masked using DOM, HPC1, HPC2 and COMAR, the area overhead due to masking is 33.1x, 56.3x, 41.3x, 48.9x respectively.
This increase in area is because of the additional registers added by HLS and the technology mapping for a pipelined design does not allow for much area optimization versus the combinatorial circuits of the designs without registers that get largely optimized. The clock period (in ns) for designs generated by MaskedHLS is less due to the pipeline stages added through registers. 

\begin{table}[htb]
    \centering
    \caption{\textcolor{black}{Comparison of Register and Latency savings using MaskedHLS and Manual Methods}}
    \resizebox{\columnwidth}{!}{%
    \begin{tabular}{ |c|ccc|ccc| }
        \hline
        \textbf{Design} & \multicolumn{3}{c|}{\textbf{Registers}} & \multicolumn{3}{c|}{\textbf{Latency}} \\
        & \textbf{MaskedHLS} & \textbf{Manually} & \textcolor{black}{\textbf{Saving(\%)}} & \textbf{MaskedHLS} & \textbf{Manually} & \textcolor{black}{\textbf{Saving(\%)}} \\
        \hline
        \midrule
        PRESENT\_DOMAND & 52 & 168 & \textcolor{black}{69.0} & 3 & 5 & \textcolor{black}{40}\\
        PRESENT\_HPC1 & 100 & 290 & \textcolor{black}{65.5} & 5 & 9 & \textcolor{black}{44.5 }\\
        \textcolor{black}{ PRESENT\_HPC2} &  \textcolor{black}{130} &  \textcolor{black}{398} & \textcolor{black}{67.3} &  \textcolor{black}{5} &  \textcolor{black}{10} & \textcolor{black}{50}\\
        PRESENT\_COMAR & 94 & 570 & \textcolor{black}{83.5} & 5 & 9 & \textcolor{black}{44.5}\\
         \textcolor{black}{AES\_DOMAND} & \textcolor{black}{1071} & \textcolor{black}{4752} & \textcolor{black}{77.4} & \textcolor{black}{5} & \textcolor{black}{9} & \textcolor{black}{44.5}\\
          \textcolor{black}{AES\_HPC1} & \textcolor{black}{1905} & \textcolor{black}{6578} & \textcolor{black}{71.0} & \textcolor{black}{7} & \textcolor{black}{13} & \textcolor{black}{46.1}\\
           \textcolor{black}{AES\_HPC2} & \textcolor{black}{2019} & \textcolor{black}{8901} & \textcolor{black}{77.3} & \textcolor{black}{7} & \textcolor{black}{13} & \textcolor{black}{46.1}\\
        \textcolor{black}{AES\_COMAR} & \textcolor{black}{2954} & \textcolor{black}{14987} & \textcolor{black}{80.2} & \textcolor{black}{13} & \textcolor{black}{25} & \textcolor{black}{50}\\\hline
        \textcolor{black}{Average} & \textcolor{black}{ } & \textcolor{black}{ } & \textcolor{black}{73.9} & \textcolor{black}{ } & \textcolor{black}{ } & \textcolor{black}{45.7}\\
        \hline
        \bottomrule
    \end{tabular}}
    \label{tab:combined}
\end{table}

\subsection{Register Balancing Results}
MaskedHLS optimizes balancing registers and hence leads to a decrease in the number of registers in the RTL versus the circuit derived via conventional methods as discussed in Section \ref{subsection:needforRegBal}. \textcolor{black}{As can be seen in Table \ref{tab:combined}, on an average over both PRESENT S-box and AES S-box designs combined, MaskedHLS results in an RTL with 73.9\% lesser number of registers and 45.7\% less latency while ensuring PSCA-security versus the conventional approach where registers are placed in all parallel paths manually without any optimization as proposed in this work.} This result affirms our objective of obtaining minimum latency and registers.

\begin{figure*}
\centering
\includegraphics[scale=0.25]{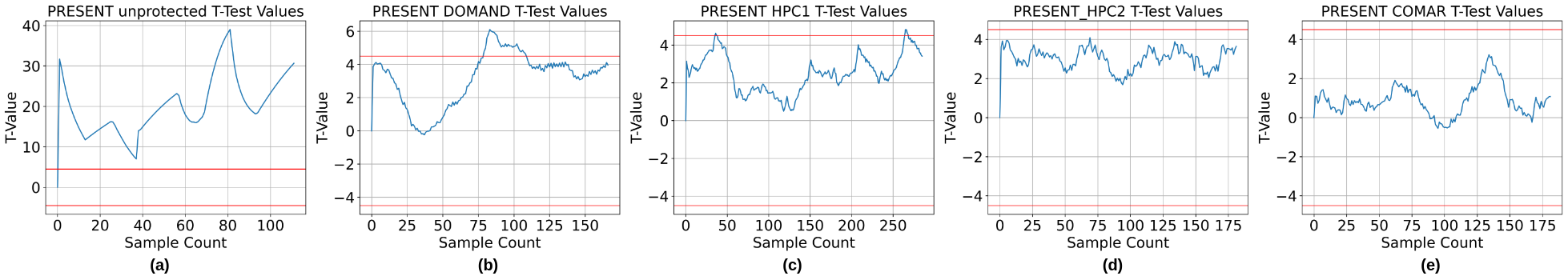}
    \caption{T-values for: (a) PRESENT\_unmasked. (b) PRESENT\_DOMAND. (c) PRESENT\_HPC1. \textcolor{black}{(d) PRESENT\_HPC2.} (e) PRESENT\_COMAR. (For each design, the x-axis contains T-values, y-axis contains the number of sample points per plaintext.)}
    \label{fig:tvlatestvalues}
\end{figure*}

\begin{figure*}
\centering
\includegraphics[scale=0.35]{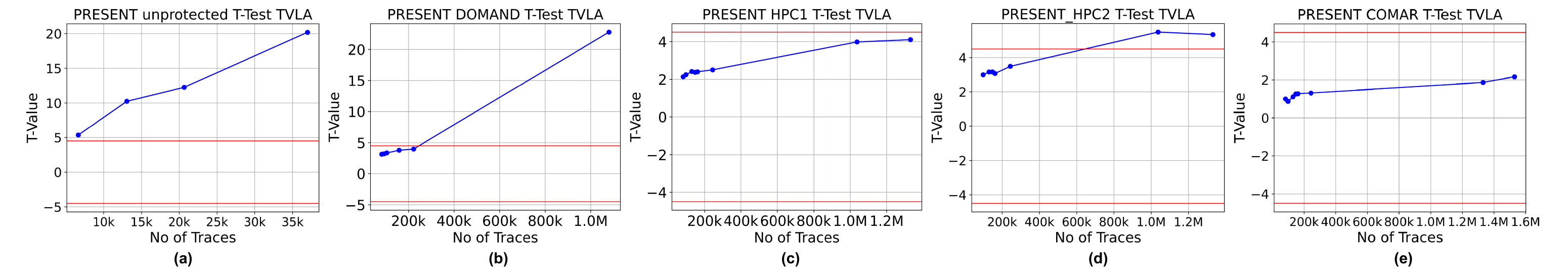}
\caption{TVLA values versus the number of traces: (a) PRESENT\_ unmasked. (b) PRESENT\_DOMAND. (c) PRESENT\_HPC1. \textcolor{black}{(d) PRESENT\_HPC2.} (d) PRESENT\_COMAR. (For each design, x axis contains T-values, y-axis contains the number of traces for which that TVLA value was observed.)}
\label{fig:tvla_results}
\end{figure*}

\begin{figure*}
\centering
\includegraphics[scale=0.27]{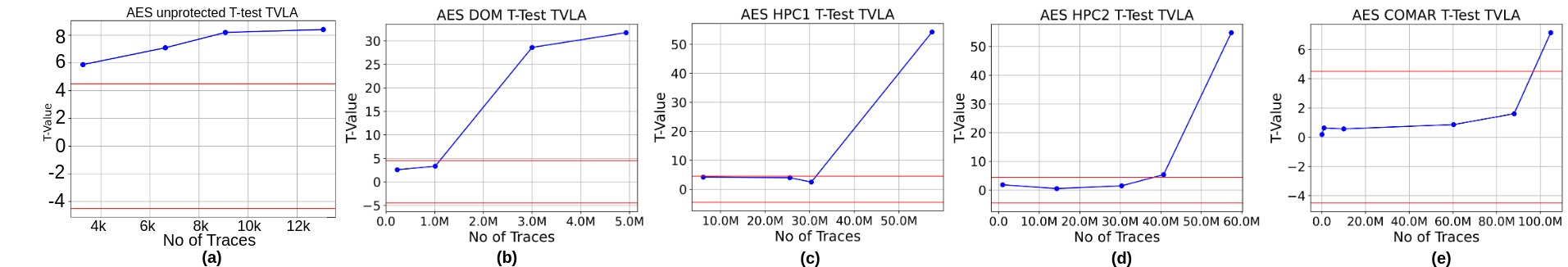}
    \caption{\textcolor{black}{TVLA values versus the number of traces: (a) AES\_unmasked. (b) AES\_DOMAND. (c) AES\_HPC1. (d) AES\_HPC2. (e) AES\_COMAR. (For each design, the x-axis contains T-values, y-axis contains the number of sample points per plaintext.)}}
    \label{fig:aestvla}
\end{figure*}


\begin{figure}[t!]
    \begin{subfigure}[t]{0.23\textwidth}
    \centering
  \includegraphics[scale=0.27]{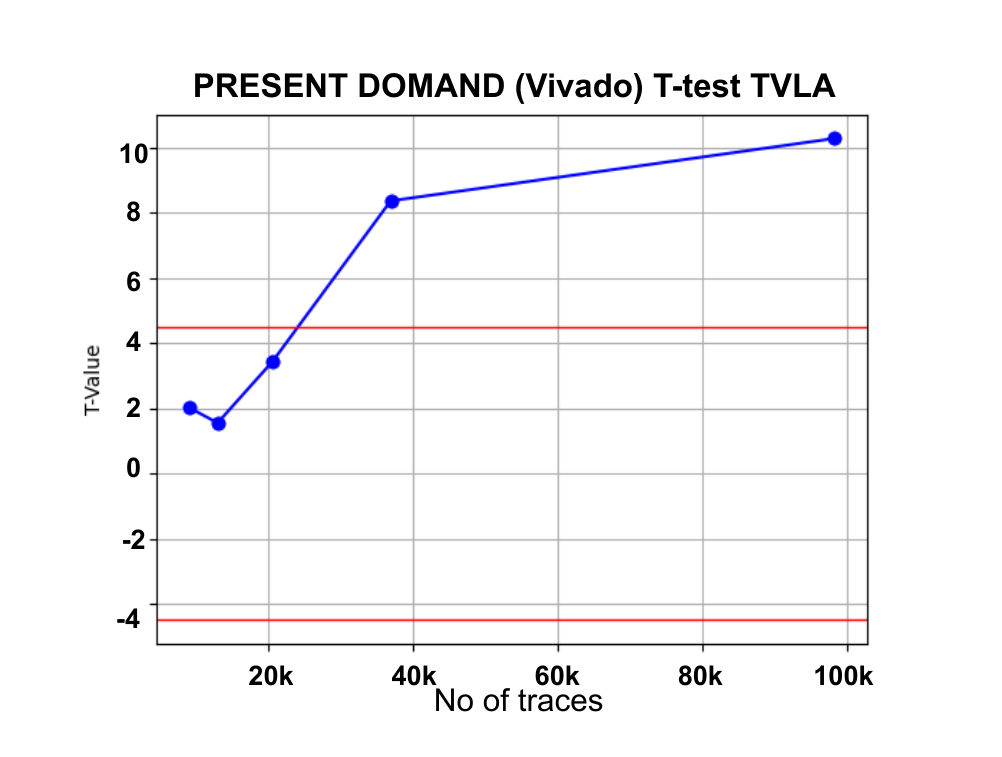}
    \caption{}
    \label{fig:vivadodomtvla}
\end{subfigure}
~~
\begin{subfigure}[t]{0.23\textwidth}
    \centering
  \includegraphics[scale=0.27]{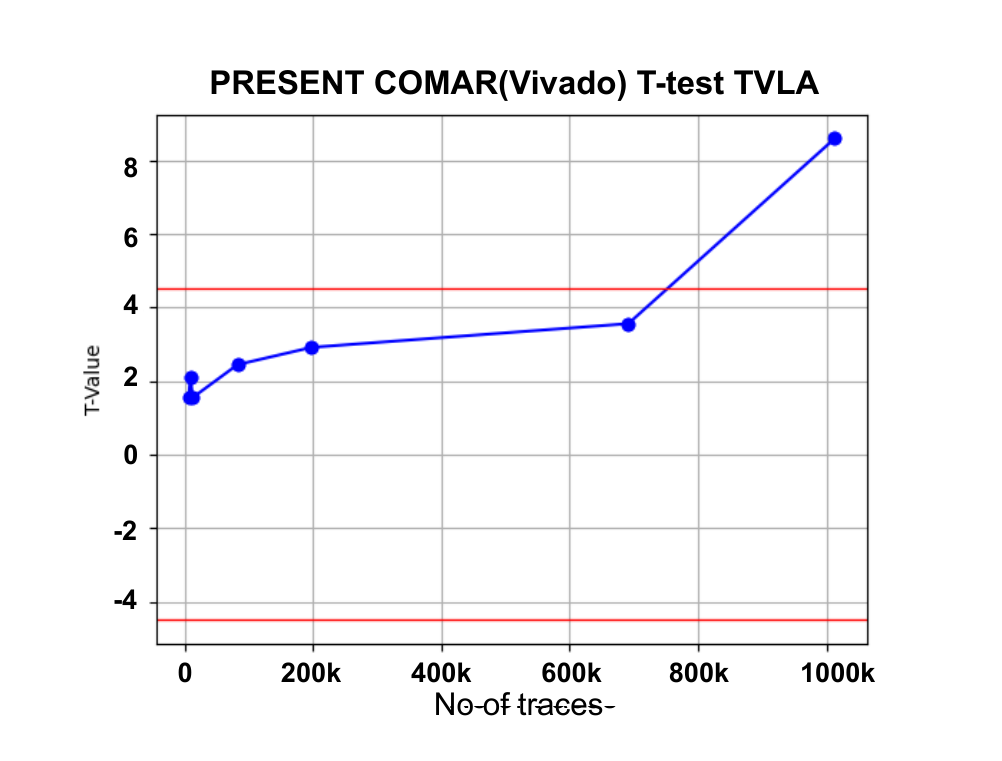}
    \caption{}
    \label{fig:vivadocomartvla}
\end{subfigure}
\caption{\textcolor{black}{TVLA values versus number of traces for: (a)  PRESENT\_DOMAND (b) PRESENT\_COMAR, both synthesized using Vivado HLS}}
\end{figure}

\subsection{PSCA Security Analysis}

It is necessary to verify that the output produced by MaskedHLS is indeed secure. We performed the Test-Vector Leakage Analysis (TVLA) \cite{tvla} of the power traces of  RTL obtained through MaskedHLS and compared them with those of unprotected (unmasked) design. Each RTL design was compiled into netlist using Synopsys DC and TSL18FS120 cell library. Then, the netlist was simulated using a testbench in the Synopsys VCS simulator. The switching activity of the circuit was dumped into the Value Change Dump (VCD) file. We then used Synopsys PrimeTime, which used netlist generated through DC and VCD file generated through VCS Compiler, giving the power traces in Fast Signal Database (FSDB) format. After that, Synopsys Custom Wave View tool was used to extract power traces in CSV format from the FSDB file. On this data, we used the conventional TVLA method \cite{tvla} to obtain the t-values.
The t-value corresponding to one plaintext for all PRESENT designs is shown in Figure\ref{fig:tvlatestvalues}. Clearly, the unprotected design is leaking. Among the PRESENT S-box designs, PRESENT\_HPC1, PRESENT\_COMAR  are more secure compared to  PRESENT\_DOMAND and PRESENT\_HPC2 whose t-value exceeded the Threshold of $\|4.5\|$ in  18\% and 16\% cases, respectively.

We extracted power traces, the number of which ranged between 5 thousand and 100 million. The objective was to check how good the protection was. The higher the number of traces where the t-value is not crossing the threshold of $\|4.5\|$, the more secure the design is. 
Figures \ref{fig:tvla_results}  shows the trend of TVLA-values for this experiment for the PRESENT designs.  The unprotected design crosses the $\|4.5\|$ mark for around 6 thousand traces. Whereas the COMAR and HPC1-masked designs are secure to 1.3 million traces.  The threshold value crosses the $\|4.5\|$ mark for around 220 thousand traces for DOMAND and around 600k for HPC2. HPC2 uses lesser random variables as compared to HPC1 as shown in Figure \ref{fig:hpc2gadget}. 

The design with COMAR is the most secure among all gadgets available. The experimental results are aligned with the Theoretical analysis of the gadgets. 

\textcolor{black}{We also performed TVLA test on the output of Vivado HLS \cite{vivado} on the DOMAND and COMAR masking gadget protected - PRESENT S-box. It can be seen in the results in Figure \ref{fig:vivadodomtvla} and \ref{fig:vivadocomartvla} that the security is significantly lesser (20k and 800k traces respectively) in terms of number of traces to obtain a correlation compared to MaskedHLS output ( $\ge 200$k traces and $\ge 1.3$ million traces respectively).  We observed similar results for PRESENT S-box using other gadgets (HPC1, HPC2). However, due to space limitations, we could not add all results. This reaffirms our motivation of domain specific MaskedHLS tool for PSCA-secure designs. Also, to test the efficacy of our tool on bigger benchmarks, we have tested our MaskedHLS tool with a Canright's AES S-box \cite{canright2005compactSBox} masked using the DOM, HPC1, HPC2 and COMAR gadgets. The TVLA results show that the key can not be revealed up to 1 million traces for AES\_DOMAND, 30 million traces for AES\_HPC1, 40 million traces for AES\_HPC2 and 100 million traces for COMAR as shown in Figure \ref{fig:aestvla}. The result for COMAR corresponds to the claims reported in the original proposal of the COMAR gadget \cite{comarknichel2023}.} Thus, our experiments clearly show that MaskedHLS generates PSCA-secure RTL from the masked software code. 

\section{Conclusion}
\label{section:conclusion}
Secure masked hardware design is a non-trivial task that requires significant time and expertise. Therefore obtaining masked hardware from masked software using HLS is beneficial. We have shown that existing HLS actually does not guarantee the PSCA security of the generated RTL. To address this shortcoming, we have developed MaskedHLS for generating PSCA secure RTL from the masked software version of the cryptographic designs. Experiments with two S-boxes for four gadgets show that MaskedHLS save on an average 73.9\% of registers and  45.7\% of latencies as compared to conventional processes. The TVLA analysis affirms the PSCA security of generated RTLs. The state-of-the-art PSCA-secure hardware design \cite{moos2019glitch} focuses on reducing the number of registers, design latency and randomness. In this regard, having minimum balancing registers is crucial. Our MaskedHLS generates RTL that uses minimum latency and registers to achieve PSCA security. In future, we plan to integrate randomness optimization strategies in our MaskedHLS tool. 

\bibliographystyle{IEEEtran}
\bibliography{reference}

\begin{thebibliography}{10}
\providecommand{\url}[1]{#1}
\csname url@samestyle\endcsname
\providecommand{\newblock}{\relax}
\providecommand{\bibinfo}[2]{#2}
\providecommand{\BIBentrySTDinterwordspacing}{\spaceskip=0pt\relax}
\providecommand{\BIBentryALTinterwordstretchfactor}{4}
\providecommand{\BIBentryALTinterwordspacing}{\spaceskip=\fontdimen2\font plus
\BIBentryALTinterwordstretchfactor\fontdimen3\font minus \fontdimen4\font\relax}
\providecommand{\BIBforeignlanguage}[2]{{%
\expandafter\ifx\csname l@#1\endcsname\relax
\typeout{** WARNING: IEEEtran.bst: No hyphenation pattern has been}%
\typeout{** loaded for the language `#1'. Using the pattern for}%
\typeout{** the default language instead.}%
\else
\language=\csname l@#1\endcsname
\fi
#2}}
\providecommand{\BIBdecl}{\relax}
\BIBdecl

\bibitem{kocherDPA}
P.~Kocher, J.~Jaffe, and B.~Jun, ``Differential power analysis,'' in \emph{AICC}.\hskip 1em plus 0.5em minus 0.4em\relax Springer, 1999, pp. 388--397.

\bibitem{dom2016}
H.~Gro{\ss}, S.~Mangard, and T.~Korak, ``Domain-oriented masking: Compact masked hardware implementations with arbitrary protection order,'' \emph{Cryptology ePrint Archive}, 2016.

\bibitem{HPC2020}
G.~Cassiers, B.~Gr{\'e}goire, I.~Levi, and F.-X. Standaert, ``Hardware private circuits: From trivial composition to full verification,'' \emph{IEEE TC}, vol.~70, no.~10, pp. 1677--1690, 2020.

\bibitem{comarknichel2023}
D.~Knichel and A.~Moradi, ``Composable gadgets with reused fresh masks $-$ first-order probing-secure hardware circuits with only 6 fresh masks,'' \emph{Cryptology ePrint Archive}, 2023.

\bibitem{ProvablySMOAES2004}
J.~Bl{\"o}mer, J.~Guajardo, and V.~Krummel, ``Provably secure masking of aes,'' in \emph{SAC}.\hskip 1em plus 0.5em minus 0.4em\relax Springer, 2004, pp. 69--83.

\bibitem{compilerassistedmasking2012}
A.~Moss, E.~Oswald, D.~Page, and M.~Tunstall, ``Compiler assisted masking,'' in \emph{CHES}.\hskip 1em plus 0.5em minus 0.4em\relax Springer, 2012, pp. 58--75.

\bibitem{moos2019glitch}
T.~Moos, A.~Moradi, T.~Schneider, and F.-X. Standaert, ``Glitch-resistant masking revisited: Or why proofs in the robust probing model are needed,'' \emph{IACR Transactions on CHES}, pp. 256--292, 2019.

\bibitem{pini2020}
G.~Cassiers and F.-X. Standaert, ``Trivially and efficiently composing masked gadgets with probe isolating non-interference,'' \emph{IEEE TIFS}, vol.~15, pp. 2542--2555, 2020.

\bibitem{shortestpath2021}
R.~Sadhukhan, S.~Saha, and D.~Mukhopadhyay, ``Shortest path to secured hardware: Domain oriented masking with high-level-synthesis,'' in \emph{ACM/IEEE DAC}, 2021, pp. 223--228.

\bibitem{hlsofchaskey2022}
S.~Inagaki, M.~Yang, Y.~Li, K.~Sakiyama, and Y.~Hara-Azumi, ``Examining vulnerability of hls-designed chaskey-12 circuits to power side-channel attacks,'' in \emph{2022 23rd ISQED}, 2022, pp. 1--1.

\bibitem{hlsofnttesl2020}
E.~Ozcan and A.~Aysu, ``High-level synthesis of number-theoretic transform: A case study for future cryptosystems,'' \emph{IEEE Embedded Systems Letters}, vol.~12, no.~4, pp. 133--136, 2020.

\bibitem{zhang2019memory}
L.~Zhang, D.~Mu, W.~Hu, Y.~Tai, J.~Blackstone, and R.~Kastner, ``Memory-based high-level synthesis optimizations security exploration on the power side-channel,'' \emph{IEEE Transactions on Computer-Aided Design of Integrated Circuits and Systems}, vol.~39, no.~10, pp. 2124--2137, 2019.

\bibitem{konigsmarkhls2017}
S.~C. Konigsmark, D.~Chen, and M.~D. Wong, ``High-level synthesis for side-channel defense,'' in \emph{IEEE 28th ASAP}.\hskip 1em plus 0.5em minus 0.4em\relax IEEE, 2017, pp. 37--44.

\bibitem{bambu}
C.~Pilato and F.~Ferrandi, ``Bambu: A modular framework for the high level synthesis of memory-intensive applications,'' in \emph{2013 FPL}, pp. 1--4.

\bibitem{pundir2022analyzing}
N.~Pundir, S.~Aftabjahani, R.~Cammarota, M.~Tehranipoor, and F.~Farahmandi, ``Analyzing security vulnerabilities induced by high-level synthesis,'' \emph{ACM JETC}, vol.~18, no.~3, pp. 1--22, 2022.

\bibitem{rivain2010provably}
M.~Rivain and E.~Prouff, ``Provably secure higher-order masking of aes,'' in \emph{CHES}.\hskip 1em plus 0.5em minus 0.4em\relax Springer, 2010, pp. 413--427.

\bibitem{bilginthesis}
B.~Bilgin, ``Threshold implementations: as countermeasure against higher-order differential power analysis,'' 2015.

\bibitem{ThresholdImplementations2006}
N.~Svetla \emph{et~al.}, ``Threshold implementations against side-channel attacks and glitches,'' in \emph{ICICS}.\hskip 1em plus 0.5em minus 0.4em\relax Springer, 2006, pp. 529--545.

\bibitem{prouff2011higherorderglitch}
E.~Prouff and T.~Roche, ``Higher-order glitches free implementation of the aes using secure multi-party computation protocols,'' in \emph{CHES}.\hskip 1em plus 0.5em minus 0.4em\relax Springer, 2011, pp. 63--78.

\bibitem{higherorderti2014}
B.~Bilgin, B.~Gierlichs, S.~Nikova, V.~Nikov, and V.~Rijmen, ``Higher-order threshold implementations,'' in \emph{ASIACRYPT}.\hskip 1em plus 0.5em minus 0.4em\relax Springer, 2014, pp. 326--343.

\bibitem{cassiers2020hardware}
G.~Cassiers, B.~Gr{\'e}goire, I.~Levi, and F.-X. Standaert, ``Hardware private circuits: From trivial composition to full verification,'' \emph{IEEE TC}, vol.~70, no.~10, pp. 1677--1690, 2020.

\bibitem{parhi2007vlsi}
K.~K. Parhi, \emph{VLSI digital signal processing systems: design and implementation}.\hskip 1em plus 0.5em minus 0.4em\relax John Wiley \& Sons, 2007.

\bibitem{vivado}
AMD, ``{Vivado HLS},'' \url{https://www.xilinx.com/products/design-tools/vivado.html}, 2022.

\bibitem{arxhls2023}
S.~Inagaki, M.~Yang, Y.~Li, K.~Sakiyama, and Y.~Hara-Azumi, ``Power side-channel attack resistant circuit designs of arx ciphers using high-level synthesis,'' \emph{ACM TECS}, vol.~22, no.~5, pp. 1--17, 2023.

\bibitem{raghunathan1999register}
A.~Raghunathan, S.~Dey, and N.~K. Jha, ``Register transfer level power optimization with emphasis on glitch analysis and reduction,'' \emph{IEEE TCAD}, vol.~18, no.~8, pp. 1114--1131, 1999.

\bibitem{bendersky2012pycparser}
E.~Bendersky, ``Pycparser c parser and ast generator written in python,'' 2012.

\bibitem{bogdanov2007present}
A.~Bogdanov, L.~R. Knudsen, G.~Leander, C.~Paar, A.~Poschmann, M.~J. Robshaw, Y.~Seurin, and C.~Vikkelsoe, ``Present: An ultra-lightweight block cipher,'' in \emph{CHES 2007}.\hskip 1em plus 0.5em minus 0.4em\relax Springer, 2007, pp. 450--466.

\bibitem{canright2005compactSBox}
D.~Canright, ``A very compact s-box for aes,'' in \emph{International Workshop on Cryptographic Hardware and Embedded Systems}.\hskip 1em plus 0.5em minus 0.4em\relax Springer, 2005, pp. 441--455.

\bibitem{tvla}
B.~G. Tobias \emph{et~al.}, ``Test vector leakage assessment (tvla) methodology in practice,'' in \emph{ICMC}, vol. 1001.\hskip 1em plus 0.5em minus 0.4em\relax sn, 2013, p.~13.

\end{thebibliography}

\vfill


\end{document}